\newtheorem{theorem}{Theorem}
\newtheorem{corollary}{Corollary}
\newtheorem{lemma}{Lemma}
\newtheorem{observation}{Observation}
\begin{document}
\title{Deterministic Dispersion of Mobile Robots in Dynamic Rings}

\author{Ankush Agarwalla\thanks{Department of Computer Science \& Engineering, National Institute of Technology Rourkela, Rourkela, India. ankush.agarwalla7@gmail.com. Work done during internship at IIT Madras.}
\and John Augustine\thanks{Department of Computer Science \& Engineering, Indian Institute of Technology Madras, Chennai, India. augustine@iitm.ac.in. Research supported in part by an Extra-Mural Research Grant (file number EMR/2016/003016) funded by the Science and Engineering Research Board, Department of Science and Technology, Government of India.}
\and William K. Moses Jr.\thanks{Department of Computer Science and Engineering, Indian Institute of Technology Madras, Chennai, India. wkmjr3@gmail.com. Research supported in part by an Extra-Mural Research Grant (file number EMR/2016/003016) funded by the Science and Engineering Research Board, Department of Science and Technology, Government of India.}
\and Madhav Sankar K.\thanks{Department of Computer Science \& Engineering, National Institute of Technology Tiruchirappalli, Tiruchirappalli, India. madhavsankar@gmail.com. Work done during internship at IIT Madras.}
\and Arvind Krishna Sridhar\thanks{Department of Computer Science \& Engineering, National Institute of Technology Tiruchirappalli, Tiruchirappalli, India. arvindkrishna1997@gmail.com. Work done during internship at IIT Madras.}
}

\date{}

\maketitle

\begin{abstract}
% !TEX root = Deterministic-Dispersion-Dynamic.tex
 %Please leave the above line untouched. Thanks, Billy.

In this work, we study the problem of dispersion of mobile robots on dynamic rings. The problem of dispersion of $n$ robots on an $n$ node graph, introduced by Augustine and Moses Jr.~\cite{AM18}, requires robots to coordinate with each other and reach a configuration where exactly one robot is present on each node. This problem has real world applications and applies whenever we want to minimize the total cost of $n$ agents sharing $n$ resources, located at various places, subject to the constraint that the cost of an agent moving to a different resource is comparatively much smaller than the cost of multiple agents sharing a resource (e.g. smart electric cars sharing recharge stations). The study of this problem also provides indirect benefits to the study of scattering on graphs, the study of exploration by mobile robots, and the study of load balancing on graphs.

We solve the problem of dispersion in the presence of two types of dynamism in the underlying graph: (i) vertex permutation and (ii) 1-interval connectivity. We introduce the notion of vertex permutation dynamism and have it mean that for a given set of nodes, in every round, the adversary ensures a ring structure is maintained, but the connections between the nodes may change. We use the idea of 1-interval connectivity from Di Luna et al.~\cite{LDFS16}, where for a given ring, in each round, the adversary chooses at most one edge to remove.

We assume robots have full visibility and present asymptotically time optimal algorithms to achieve dispersion in the presence of both types of dynamism when robots have chirality. When robots do not have chirality, we present asymptotically time optimal algorithms to achieve dispersion subject to certain constraints. Finally, we provide impossibility results for dispersion when robots have no visibility.
\end{abstract}

\textbf{Keywords:} Dispersion,
	Load balancing,
	Mobile robots,
	Collective robot exploration,
	Scattering,
	Uniform deployment,
	Dynamic graph algorithms,
	Dynamic networks,
	Deterministic algorithms,
	Distributed algorithms

%----------------------------------------------------------------------------------------------

\section{Introduction}
\label{sec:intro}
% !TEX root = Deterministic-Dispersion-Dynamic.tex
 %Please leave the above line untouched. Thanks, Billy.

\subsection{Background and Motivation}
The paradigm of using mobile robots to study distributed problems in a network is an interesting and much studied area. Each robot represents a computational machine with the ability to move, either in a geometric space or in a graph. Typically these robots must coordinate with each other to either achieve some specific configuration on the graph or else perform some sort of distributed calculation of a property of the graph (e.g., count the number of nodes of the graph).

The particular problem of dispersion was introduced in \cite{AM18} and asks $n$ robots on an $n$ node graph to coordinate and reach a configuration, as quickly as possible, such that exactly one robot is present on each node. It is generally applicable to any real world scenario where $n$ agents must coordinate and share $n$ resources, located at various places, where we want to minimize the total cost of solving the problem subject to the constraint that the cost of agents moving around on the graph is dwarfed by the cost of having more than one agent share a given resource. For example, consider smart electric cars coordinating amongst themselves to find recharge stations. It may take 30 minutes to drive to reach any given station, but it takes hours to recharge a given car. We want all cars to recharge as soon as possible. Thus, it is in the interests of the car manufacturer to design an algorithm that would allow cars to coordinate with each other and end up such that at most one car uses any given recharge center, when possible.

In addition to its immediate real world applications, the study of dispersion benefits three other research areas: (i) the study of scattering or uniform deployment of agents on a graph, (ii) the study of exploration of mobile robots on a graph, and (iii) the study of load balancing on a graph. 

For a given graph with $n$ nodes, the problem of scattering \cite{BFMS11,EB11,SMOKM16} of $k \leq n$ robots requires robots to coordinate and move such that they end up uniformly deployed over the graph. When $k=n$, the problem of scattering is exactly that of dispersion on the same graph.

For a given graph with $n$ nodes, $n$ robot collaborative exploration \cite{DMKJSC06,FPGLKDPA06,DMLJSC07,BPMFGAXJ11,BPVIXN14,HYKNLSTS14,DYMFNAKNSA16} asks that these robots coordinate amongst each other and collectively visit every node of the graph at least once. Notice that any solution to dispersion immediately applies to $n$ robot collaborative exploration for the same assumptions and model parameters. Thus any solutions for dispersion can be directly applied to collaborative exploration.

Load balancing on graphs \cite{BV86,PV89,C89,SS94,MGS98} asks that $n$ resources be equally split amongst the $n$ nodes on the graph in as few rounds as possible. Typically, the nodes control where different resources go and there are usually constraints on how much load may pass through a given edge in a round. If we consider robots as the resources, notice that the problem is very similar to dispersion. It is our hope that further study of dispersion will eventually lead to a mechanism that would allow cross-use of techniques and tools between the areas of mobile robots and load balancing.

The introduction of dynamism into the network makes the problem very interesting, as modern day networks change rapidly. Mobile robots themselves as physical robots may have to perform their tasks in physically changing environments. With the relationship dispersion has to load balancing, it may be possible to uncover useful strategies to deal with balancing resources when processors go down and are replaced by others or the connections between different processors change.

\subsection{Related Work}
%NOTE First comes the other dispersion paper.
The problem of dispersion using mobile robots was first studied in \cite{AM18}, where the authors studied the trade-off between time to dispersion and memory required by each robot for different types of static graphs.

%NOTE Then scattering.
A related problem is that of scattering or uniform deployment, where $k \leq n$ robots have to uniformly deploy themselves in a given network with $n$ nodes. \cite{BFMS11} looked into this problem on grids while \cite{EB11,SMOKM16} looked into this problem on static rings. In \cite{EB11}, they present various possibility and impossibility results in a synchronous system for oblivious robots with limited visibility and which don't know the size of the ring or total number of robots. However, for their algorithms they require that no two robots start on the same node, which in the case of $n$ robot scattering, implies that scattering is trivial. In \cite{SMOKM16}, they present various possibility and impossibility results in an asynchronous system.

%NOTE Then comes exploration.
Another closely related problem to dispersion is exploration, where one or more robots start at a given node and try to collectively visit all vertices of the graph as quickly as possible. Exploration of rings by mobile robots has been a long studied problem and there is much literature on it. In particular, several works have studied deterministic exploration of anonymous, unoriented rings by oblivious robots that relied on their view of the ring to make decisions \cite{LABMTB10,DALALLPF13,DALALLPF15,FPIDPASN13}. Exploration in an asynchronous setting was studied in \cite{LABMTB10,FPIDPASN13,DALALLPF15} while \cite{DALALLPF13} looked into the synchronous setting. %In \cite{FPIDPASN13}, it is shown that the number of robots $k$ must not divide the size of the ring $n$ in order to enable a deterministic solution. In \cite{LABMTB10}, it is proved that there exists no deterministic protocol that can explore an even sized ring with $k \leq 4$ robots. A robot is myopic when its visibility is restricted to a subset of nodes of the entire ring. Exploration by myopic robots was studied in \cite{DALALLPF13}, where it is shown that no deterministic exploration protocol solves the problem with less than five agents when $n > 6$. Positive results using optimal number of robots were also presented in the paper for rings of size $\geq 3$.

In this paper, we prove impossibility results on myopic robots achieving dispersion on a dynamic ring. A robot is myopic when its visibility is restricted to a subset of nodes of the entire ring. \cite{DALALLPF13} provides positive and negative results for varied degrees of myopicness of robots for different levels of synchronicity of networks. Of particular interest, they provide synchronous deterministic algorithms for robots with no visibility to explore a static ring. For rings of size $3 \leq n \leq 6$, $k = n-1$ robots are required (note when $n=6$, only $4$ robots are required) and for rings of size $n \geq 7$, $5$ robots are required.

When $k$ robots all start at a given node and coordinate to explore a graph, it is referred to as $k$ robot collaborative graph exploration \cite{DMKJSC06,FPGLKDPA06,DMLJSC07,BPMFGAXJ11,BPVIXN14,HYKNLSTS14,DYMFNAKNSA16}. When $k=n$, any solution to dispersion also acts as a solution to $k$ robot exploration under the same conditions.

%NOTE Now comes dynamism.
Exploration on dynamic rings has been investigated in \cite{IW13,LDFS16} under different variants of 1-interval connectivity dynamism. The notion of 1-interval connectivity dynamism was introduced in \cite{OW05} for complete graphs and requires an adversary to connect the nodes of the graph such that in every round there must exist a connected spanning subgraph of the nodes. This model was studied by \cite{OW05}, \cite{IW13}, and \cite{IKW14} for complete graphs, rings and cactuses respectively. \cite{LDFS16} studied the model where for a given ring of nodes, in each round an adversary chooses at most one edge of the ring and removes it for that round. This definition is restricted in that it does not allow nodes in the ring to have different neighbors in different rounds. In \cite{LDFS16}, for a fully synchronous system they proved that for an anonymous ring if size of ring is not known, then it is impossible to achieve exploration with termination. They also showed that if termination is not needed, exploration is possible under those conditions. If ring is anonymous and an upper bound on its size $N$ is known, then if robots have chirality, exploration with termination can be achieved in $3N$ time. Whereas without chirality, exploration with termination can be achieved in $5N$ time. In \cite{IW13}, they show that for a fully synchronous system, if an agent knows the entire graph every round, then exploration can be achieved in $2n - 2$ rounds. When the agent only knows its neighborhood each round, exploration is achieved in $n + n (\delta - 1) \pm \Theta(\delta)$ rounds, where $\delta$ is the length of the interval such that every edge of the graph is present at least once every $\delta$ rounds.

Exploration of graphs in presence of other types of dynamism has been studied extensively in the literature. Exploration of a dynamic graph where the graph is being modified by an oblivious adversary has been done in \cite{AKL08}. Exploration on temporal graphs has been studied in \cite{EHK15} where the edge set can change from step to step. Exploration of periodic time-varying graphs has been studied in \cite{FMS13} and \cite{FMS09}. 

\cite{KO11} serves as a good survey on the various models for dynamic networks, both random and adversarial, and algorithms for these models.

%NOTE And finally we have load balancing.
Load balancing is the uniform distribution of load among the nodes. If we think of the robots as load, then we can think of dispersion as load balancing on graphs with ``smart" load. We refer to each load as smart because it is the load that makes the decisions of where to move as opposed to the nodes. Divergence in the literature arises from the type of load being balanced (either discrete \cite{BV86,PV89} or continuous \cite{MGS98}) and the type of model employed (either diffusion or dimension exchange). In the case of diffusion \cite{C89,SS94,MGS98}, every node balances its load concurrently with all its neighbors in each round while in the dimension exchange model \cite{XL92}, each node balances load with at most one neighbor in each round. The work done in this paper is closer to the diffusion model of load balancing with discrete load. Load balancing on a ring in particular was looked at in \cite{BV86,PV89}.

\subsection{Our Results} 
We develop several deterministic algorithms for robots to achieve dispersion on a ring in the presence of different types of dynamism and the ability of robots to have chirality or not. All algorithms present have running time $O(n)$ rounds and are asymptotically time optimal. This is easy to see since if all robots are present on a given node in some configuration, it takes $\Omega(n)$ rounds for any robot to reach the node farthest away from the given node. We also provide impossibility results when robots have no visibility. To the best of our knowledge, this is the first work that analyzes dispersion in a dynamic setting.

We consider two types of dynamism: (i) vertex permutation and (ii) 1-interval connectivity. We introduce the notion of vertex permutation dynamism and use it to mean that for a given set of nodes, in every round, the adversary may change the order of the nodes while still maintaining a ring structure. We use the restricted definition of 1-interval connectivity from \cite{LDFS16} which says that for a given ring of nodes, in every round, the adversary can choose to remove at most one edge from the ring. Most of our main algorithms are valid in the presence of both vertex permutation and 1-interval connectivity. The combination of these two types of dynamism models the more general definition of 1-interval connectivity from \cite{OW05,KLO10} for the restricted setting of rings. 

Recall that an algorithm to solve dispersion on a ring also solves collaborative exploration and scattering on a ring under the same conditions and assumptions. Thus, this paper not only presents solutions to the problem of dispersion, but shows how multi-robot exploration and also scattering may be carried out in the presence of vertex permutation and 1-interval connectivity.

We first assume robots have chirality and full visibility and develop deterministic algorithms for increasing levels of dynamism. Algorithm \emph{VP-Chain} achieves dispersion on a ring in the presence of vertex permutation dynamism. We extend the ideas present in that algorithm and design algorithm \emph{VP-1-Interval-Chain} to achieve dispersion in the presence of both vertex permutation dynamism and 1-interval connectivity dynamism. 

When robots do not have chirality, we design algorithms to achieve dispersion under certain constraints in the presence of both vertex permutation and 1-interval connectivity dynamism. If all robots are initially located at the same node, then by running algorithm \emph{No-Chir-Preprocess} and then \emph{VP-1-Interval-Chain}, they can achieve dispersion. If the ring is an odd size ring, then robots running algorithm \emph{Achiral-Odd-VP-1-Interval-Chain} achieve dispersion. If the ring is of size $4$, then robots running algorithm \emph{Achiral-Even4-VP-1-Interval-Chain} achieve dispersion.

We subsequently show that deterministic dispersion is impossible to achieve in the face of either vertex permutation dynamism or 1-interval connectivity dynamism when robots have no visibility. We also provide corollaries that show that when the adversary has additional powers, dispersion with no visibility robots in the face of either type of dynamism is impossible even when algorithm designers have access to random bits.

\subsection{Organization of Paper}
The rest of this paper is organized as follows. Section~\ref{sec:prelims} provides definitions and technical preliminaries required for the rest of the paper. Section~\ref{sec:vp} provides our algorithm for robots with full visibility and chirality to achieve dispersion in the presence of only vertex permutation dynamism. Section~\ref{sec:vp-1-int} provides our algorithm for robots with full visibility and chirality to achieve dispersion in the presence of both vertex permutation and 1-interval connectivity dynamism. Section~\ref{sec:no-chir-vp-1-int} presents our results for dispersion in the presence of vertex permutation and 1-interval connectivity when robots have full visibility but do not have chirality. In Section~\ref{sec:no-vis-imp}, we present our impossibility results for achieving dispersion for both vertex permutation dynamism and 1-interval connectivity dynamism when robots have no visibility. Finally, in Section~\ref{sec:conc}, we present our conclusions and some open questions.

%----------------------------------------------------------------------------------------------

\section{Technical Preliminaries}
\label{sec:prelims}
% !TEX root = Deterministic-Dispersion-Dynamic.tex
 %Please leave the above line untouched. Thanks, Billy.

\subsection{Structure of Ring}
We consider an $n$ node anonymous ring, i.e. each node does not have any label. The multiplicity of a node is the number of robots present on that node. We classify nodes in each round based on their multiplicities in that round. A \textbf{hole} is a node not containing any robots. A \textbf{singleton node} is a node containing exactly one robot. A \textbf{multinode} is a node containing two or more robots. A node containing one or more robots is said to be occupied. We define a \textbf{chain} as a sequence of nodes in the following order: a multinode followed by zero or more singleton nodes and finally a hole. When we only consider the order of nodes in the ring in clockwise order, we call any such chain that occurs a \textbf{clockwise chain}. Similarly, we define an \textbf{anticlockwise chain}. We define a \textbf{good chain} as a chain where the adversary has not removed an edge between any of the nodes of the chain. We define a \textbf{bad chain} as a chain where the adversary has removed an edge between two of the nodes of the chain. Examples of a clockwise good chain and a clockwise bad chain are illustrated in Figures~\ref{fig:clock-good-chain} and~\ref{fig:clock-bad-chain} respectively.

\begin{multicols}{2}
  	\begin{minipage}{\columnwidth}
  		%\centering
  		\includegraphics[page=10,height=1.45in]{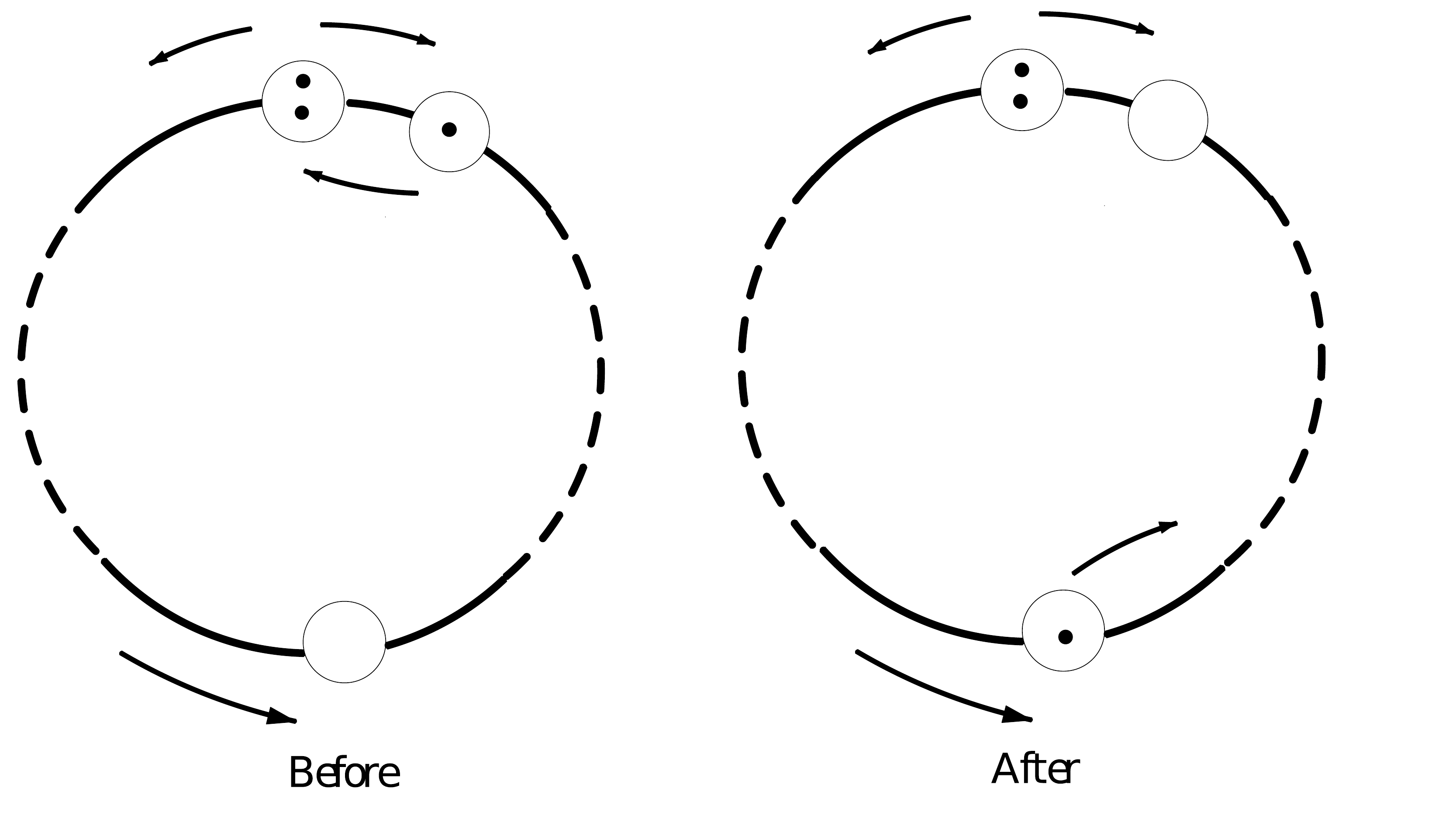}
  		\captionof{figure}{Example of a clockwise good chain}\label{fig:clock-good-chain}
  	\end{minipage}%
  	
  	\begin{minipage}{\columnwidth}
  		\centering
  		\includegraphics[page=11,height=1.45in]{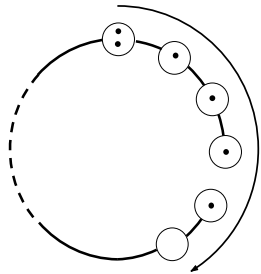}
  		\captionof{figure}{Example of a clockwise bad chain} \label{fig:clock-bad-chain}
  	\end{minipage}
\end{multicols} 

\subsection{Nature of Dynamism}
We consider two forms of dynamism: (i) vertex permutation and (ii) 1-interval connectivity. We introduce the notion of vertex permutation dynamism in this paper and define it as follows. In each round, the nodes of the ring are shuffled by an adversary such that their relative order may change. 

The notion of 1-interval connectivity dynamism was introduced in \cite{OW05} for complete graphs and requires an adversary to connect the nodes of the graph such that in every round there must exist a connected spanning subgraph of the nodes. \cite{OW05} studied it for complete graphs, \cite{IW13} studied it for the case of rings, and \cite{IKW14} looked into it for cactuses. Note that the connections between nodes need not remain from round to round. In this paper, we use a restricted version of 1-interval connectivity, proposed in \cite{LDFS16}, which requires that for a given ring of nodes, in each round an adversary chooses at most one edge of the ring and removes it for that round. This definition is restricted in that it does not allow nodes in the ring to have different neighbors in different rounds. When we allow both vertex permutation and restricted 1-interval connectivity, it is equivalent to the more general version of 1-interval connectivity from \cite{OW05,KLO10} when restricted to rings. 

\subsection{Powers of Robots}
We consider $n$ robots, each having a unique label. Each robot has the ability to detect, in each round, the robot with the least label among all robots co-located with it on its node and know what that label is. For results in Sections~\ref{sec:vp},~\ref{sec:vp-1-int}, and~\ref{sec:no-vis-imp}, robots possess chirality, i.e. they share the same notion of clockwise and anticlockwise directions and can determine which edge to leave a node in order to move in the required direction. In Section~\ref{sec:no-chir-vp-1-int}, robots do not possess chirality. Finally, robots have the ability to perform weak global multiplicity detection and strong local multiplicity detection. Weak global multiplicity detection means that robots can detect if zero, one, or more than one robot is present on a given node within their visibility range. Strong local multiplicity detection means that robots can detect exactly how many robots are present on the node they are at. Robots do not require any memory for most algorithms in our paper with the exception of those presented in Sections~\ref{subsec:one-node-start} and~\ref{subsec:even-ring}.

For a given robot $u$ in a given round, the node the robot is located on in that round is denoted by $u.node$. In each round, each robot performs a look operation which allows it to update its view of the robots present in the ring. Robots have $k$-visibility, where $0 \leq k \leq n$, which constrains how many nodes away from $u.node$ a robot $u$ can see.
  
For a given direction, the distance between two nodes is defined as the number of edges between them. The view of a robot $u$ with $k$-visibility is given by the vector $<CLOCKWISE$, $ANTI-CLOCKWISE$, $MULTIPLICITY$, $MISSING-EDGE>$. $CLOCKWISE$ and $ANTI-CLOCKWISE$ are vectors whose elements are the distances between consecutive occupied nodes starting from $u.node$ and at distance at most $k$ away from $u.node$ along clockwise and anticlockwise directions respectively. $MULTIPLICITY$ is a vector whose elements are the distances of multinodes at distance at most $k$ from $u.node$ in the clockwise direction. If there are no multinodes at distance at most $k$ from $u.node$ (including the node itself), then $MULTIPLICITY$ contains just one element whose value is~$-1$. $MISSING-EDGE$ contains one element representing the distance of the closest node in the clockwise direction having a missing edge. In the case that no edge is missing, the value of $MISSING-EDGE$ is $\bot$. When $k=0$, we say that robots have \textbf{no visibility}, i.e. a robot $u$ can only observe other robots present in $u.node$. When $k \geq n/2$, we say that robots have \textbf{full visibility}, i.e. they will have information about which nodes are holes, singleton nodes, and multinodes for every node in the ring.

\subsection{Synchronicity}
Robots execute algorithms in a synchronous manner, where each round of execution is characterized by the following four stages. Initially, an adversary performs some form of dynamism in the ring. Then each robot performs a look operation, where it updates its view. Subsequently, each robot may perform some computations and detect the robot with least label among all robots co-located on the same node as it. Finally, each robot either moves to a new node or stays in its current node. Note that a given robot $u$ can only move to a node that is (i) adjacent to $u.node$ in the given round and (ii) for which the adversary has not removed the edge connecting that node and $u.node$.

\subsection{Problem Description}
Given an initial arbitrary configuration of $n$ robots present on the nodes of an $n$ node ring, the problem of dispersion requires the robots to coordinate and move such that they reach a configuration where exactly one robot is present on each node.

%----------------------------------------------------------------------------------------------

\section{Dispersion with Vertex Permutation}
\label{sec:vp}
% !TEX root = Deterministic-Dispersion-Dynamic.tex
 %Please leave the above line untouched. Thanks, Billy.

The algorithm \emph{VP-Chain} works as follows. \emph{VP-Chain} is used by robots to check if they are a part of a clockwise chain in the given round. If they are, then they move to fill the hole in the clockwise chain without creating any new holes. Repeatedly performing this process allows the robots to achieve a state of dispersion, at which point no multinodes will exist and the robots will stop executing the algorithm. Thus, we arrive at the following theorem.

\alglanguage{pseudocode}
\begin{algorithm}
	\caption{VP-Chain, run by each robot $u$}
	\label{alg:vp-chain}
	\begin{algorithmic}[1]
		\While{there exists a multinode in the ring}
			\If{($u.node$ is part of a clockwise chain) and ($u$ has least label among robots in $u.node$)} 
				\State Move clockwise.
			\Else
				\State Don't move.
			\EndIf
		\EndWhile
	\end{algorithmic}
\end{algorithm}

\begin{theorem}\label{the:vp-chain-works}
	When robots have chirality and full visibility, Algorithm \emph{VP-Chain} achieves dispersion on a ring in the presence of vertex permutation dynamism in $O(n)$ rounds.
\end{theorem}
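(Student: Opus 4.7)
The plan is to establish (i) correctness --- that upon termination the configuration is dispersed and that no new multinodes are ever created --- and (ii) the $O(n)$ time bound via a simple potential-function argument. The main subtlety is handling the adversary's vertex permutation, which can reshape the ring arbitrarily each round.

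First, I would analyze what happens in a single round to an active clockwise chain $M, S_1, \ldots, S_k, H$ (with $k \geq 0$). Since the algorithm is synchronous and every node in the chain independently satisfies the activation condition, the least-labeled robot at $M$ advances to $S_1$, each singleton robot at $S_i$ advances to $S_{i+1}$, and the robot at $S_k$ advances to $H$. The simultaneous moves net out to: $M$ loses exactly one robot, each $S_i$ retains exactly one robot, and $H$ becomes a singleton. Hence no new multinode is ever created, and the multinode count is monotonically non-increasing. I would also check that two distinct active chains in the same round cannot share a node and thus cannot interfere: a singleton's chain membership is uniquely pinned down by the requirement that its first non-singleton counter-clockwise be a multinode and its first non-singleton clockwise be a hole, which determines the chain uniquely; multinodes and holes likewise belong to at most one chain.

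Next, I would show that whenever a multinode exists, at least one clockwise chain also exists in the same round. Because any multinode forces at least one hole, pick any hole $H$ and walk counter-clockwise skipping singletons; if the first non-singleton encountered is a multinode we have a chain ending at $H$, otherwise it is another hole and we iterate. Since the ring is finite and contains a multinode by assumption, this process must terminate with some multinode, producing a valid clockwise chain. Combined with the previous paragraph, this shows that in every round with at least one multinode, at least one chain fires and the configuration strictly progresses.

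For the running time, I would use the potential $\Phi$ equal to the total number of robots currently located on multinodes. Initially $\Phi \leq n$, and $\Phi = 0$ characterizes exactly a dispersed configuration. Each active chain decreases $\Phi$ by at least one: by one if the multinode still has at least two robots after the move, and by two if it drops to a singleton. Since at least one chain fires per round while $\Phi > 0$, we obtain $\Phi = 0$ within at most $n$ rounds, at which point the while-loop guard fails and the algorithm halts. The hard part is arguing the analysis survives the vertex permutation; fortunately, both the existence-of-a-chain argument and the no-new-multinode invariant are statements about a single snapshot of the ring within a single round, so they hold no matter how the adversary reshuffles between rounds, and $\Phi$ is a count of robots on multinodes that is oblivious to the ring's geometric embedding, so its monotone decrease is preserved under arbitrary permutations.
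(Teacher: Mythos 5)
Your proof is correct and follows essentially the same route as the paper: establish that a clockwise chain exists whenever a multinode does, show that firing a chain fills its hole without creating new holes or multinodes, and conclude a linear bound via a monotone potential. The only cosmetic difference is your choice of potential (robots on multinodes, bounded by $n$) versus the paper's (number of holes, bounded by $n-1$); your added checks that distinct chains are node-disjoint and that a chain can always be found by walking back from a hole are welcome details the paper leaves implicit.
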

\begin{proof}

We prove that dispersion is achieved in $O(n)$ by first showing that the number of holes decreases in each round. We subsequently bound the number of possible holes in the ring.

\begin{lemma}\label{lem:vp-hole-dec}
In each round where dispersion has not yet been achieved, the number of holes in the ring decreases by at least $1$.
\end{lemma}

\begin{proof}
	In each round, if there is at least one hole, there will definitely be a multinode. Furthermore, it is clear that there will exist a clockwise chain. The algorithm directs every robot occupying a singleton node in this clockwise chain to move towards the hole. The algorithm directs exactly one robot in the multinode (the robot with the least label) to move towards the hole. Thus, for this clockwise chain, the hole is filled and no new holes are made.
	
	For every round where there exists a hole, a clockwise chain exists. Thus, in every round where dispersion hasn't yet occurred, the number of holes decreases by at least $1$.
\end{proof}

	There can be at most $n-1$ holes in the initial configuration. From Lemma~\ref{lem:vp-hole-dec}, it is clear that the upper bound on the number of rounds it takes to achieve dispersion is $n-1$ rounds. Hence \emph{VP-Chain} achieves dispersion in $O(n)$ rounds.
\end{proof}

%----------------------------------------------------------------------------------------------

\section{Dispersion with Vertex Permutation and 1-Interval Connectivity}
\label{sec:vp-1-int}
% !TEX root = Deterministic-Dispersion-Dynamic.tex
 %Please leave the above line untouched. Thanks, Billy.

We now describe how algorithm \emph{VP-1-Interval-Chain} works.

If a robot is part of a multinode, then it is possible it might be part of either zero, one, or two good chains. If it's not part of a good chain, the robot doesn't move. If it's part of exactly one good chain, then the robot (if it has least label among robots on node) moves such that the hole of the chain is filled without creating any new holes. If it's part of two good chains, then the robots on the multinode coordinate to fill in the hole in the clockwise chain. 

If a robot's node is a singleton node, then it checks if it is part of a good chain. If it's not, the robot doesn't move. If it is, then the robot checks if the multinode of that chain belongs to two good chains. If so, then if the robot belongs to the anticlockwise chain, then it doesn't move, else it moves closer to the hole. 

Thus, we arrive at the following theorem.

\alglanguage{pseudocode}
\begin{algorithm}
	\caption{VP-1-Interval-Chain, run by each robot $u$}
	\label{alg:vp-1-interval-chain}
	
	\begin{algorithmic}[1]
		
		\While{there exists a multinode in the ring}
			\If{$u.node$ is singleton node and part of good chain}
				\If{the multinode of the chain is also part of another good chain that is a clockwise chain}
					\State Don't move.
						
				\Else
					\State Move towards the hole of $u.node$'s chain.
				\EndIf
				
			\ElsIf{($u.node$ is a multinode) and ($u$ has least label among robots in $u.node$)}
				\If{$u.node$ is part of two good chains}
					\State Move clockwise.
					
				\ElsIf{$u.node$ is part of one good chain}
					\State Move towards the hole of the good chain.
				\EndIf
				
			\Else
				\State Don't move.
			\EndIf
		\EndWhile
	\end{algorithmic}
\end{algorithm}

\begin{theorem}\label{the:vp-1-int-chain-works}
	When robots have chirality and full visibility, Algorithm \emph{VP-1-Interval-Chain} achieves dispersion on a ring in the presence of vertex permutation dynamism and 1-interval connectivity dynamism in $O(n)$ rounds.
\end{theorem}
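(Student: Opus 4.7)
Following the template of Theorem~\ref{the:vp-chain-works}, the plan is to establish a per-round progress lemma: whenever a hole remains, the algorithm strictly reduces the hole count. Combined with the trivial bound of at most $n-1$ initial holes, this yields termination in $O(n)$ rounds.

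The first step is to show that in every round with at least one hole, at least one good chain is activated by the algorithm. Fix any hole $H$ and let $M_1$, $M_2$ be the multinodes nearest to $H$ in the anticlockwise and clockwise directions respectively. The clockwise chain from $M_1$ to $H$ and the anticlockwise chain from $M_2$ to $H$ traverse disjoint arcs of the ring (meeting only at the endpoint $H$), so the single edge that the 1-interval-connectivity adversary removes can render at most one of them bad. The same disjointness argument holds for the pair of chains rooted at any single multinode, so every multinode that possesses any chain possesses at least one good chain. Hence $M_1$ lies in at least one good chain, and tracing the multinode case analysis (move clockwise if in two good chains, otherwise move toward the hole of the unique good chain) shows that $M_1$ must activate one.

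The second step is to argue that each activation fills its target hole without creating any new hole. For an activated good chain $M \to S_1 \to \cdots \to S_k \to H$, the least-labeled robot on $M$ steps toward $S_1$ while each $S_i$ steps toward $S_{i+1}$ or $H$; every intermediate $S_i$ simultaneously vacates its node and receives a robot from its predecessor, so it remains a singleton, while $M$ loses one robot but started with at least two and therefore stays occupied. The algorithm's suppression rule --- a singleton in an anticlockwise chain stays put when its multinode has already committed to a clockwise chain --- is precisely what prevents singletons from marching forward while their source stays behind and leaving a new hole at the singleton's original position. When two activated chains converge on the same hole $H$, tracking the excess $\sum_{i}(m_i - 1)$ over multinodes (which equals the number of holes) still shows a net drop of one: the two source multinodes each shed a robot (total $-2$) while the former hole acquires one unit of excess on becoming a multinode ($+1$).

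The main obstacle I foresee is the bookkeeping for a multinode in two good chains, where the algorithm's clockwise priority deactivates its anticlockwise chain. One has to confirm that either the hole at the far end of that deactivated chain gets filled from its other side in the same round, or that the overall hole count still drops because the multinode has filled a different hole clockwise; the disjoint-arcs observation, together with the single-edge budget of the adversary, rules out any configuration in which every hole is simultaneously abandoned. Once the per-round lemma is in place, the $O(n)$ bound follows exactly as in Theorem~\ref{the:vp-chain-works}.
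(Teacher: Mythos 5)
Your overall strategy coincides with the paper's: establish that in every round before dispersion at least one good chain exists (because the adversary's single missing edge cannot lie on two edge-disjoint chains), conclude that at least one hole is filled while no new hole is created, and finish with the bound of $n-1$ initial holes. Your explicit disjoint-arcs argument for the two chains flanking a hole is in fact a more careful justification of Lemma~\ref{lem:vp-1-int-good-chain-exists} than the paper itself gives. However, one intermediate claim you assert is false: ``every multinode that possesses any chain possesses at least one good chain.'' A multinode need not root two chains, so the disjointness argument does not apply to it. Take $n=6$ with $3$ robots on node $1$, $2$ robots on node $2$, a singleton on node $3$, and holes on nodes $4,5,6$: node $1$ roots no clockwise chain (its clockwise neighbour is the multinode $2$), its unique chain is the anticlockwise chain $(1,6)$, and the adversary kills it by deleting the edge $\{1,6\}$. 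Consequently your conclusion that ``$M_1$ must activate'' a good chain is also unwarranted; in such configurations it can be $M_2$, not $M_1$, that does the work. Neither error is fatal, because the argument only needs the weaker, correct statement: at least one good chain exists somewhere in the ring, and whichever multinode roots it necessarily activates \emph{some} good chain under the algorithm's case analysis (the clockwise one if it lies in two good chains, the unique good one otherwise), so some hole is filled.

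The ``main obstacle'' you flag in your final paragraph is not an obstacle, and you should not leave it unresolved. When a multinode's clockwise priority deactivates its anticlockwise chain, the hole at the far end of that anticlockwise chain may indeed go unfilled that round; this is harmless, since the per-round lemma only asserts that the total number of holes drops by at least one, which the activated clockwise chain already guarantees. There is no need to rule out ``every hole being simultaneously abandoned'' beyond exhibiting a single activated good chain, and no need for the adversary's edge budget in that step. With the false universal claim replaced by the global existence statement and this last step closed off, your proof is essentially the paper's.
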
 
\begin{proof}
We will show that in every round prior to dispersion being achieved, there exists at least one good chain. We then argue that this implies that the number of holes in each round decreases by at least $1$. Using this, we are then able to bound the number of rounds it takes to achieve dispersion.

\begin{lemma}\label{lem:vp-1-int-good-chain-exists}
In a round where dispersion has not yet been achieved, there exists a good chain.
\end{lemma}

\begin{proof}
In a given round, if the robots have not completed dispersion, then there exists at least one hole and at least one multinode. Thus, there exist at least two chains.  As the adversary can remove at most one edge, there will be at least one good chain in the ring.
\end{proof}

\begin{lemma}\label{lem:vp-1-int-hole-dec}
In a round where dispersion has not yet been achieved, the number of holes decreases by at least $1$.
\end{lemma}

\begin{proof}
A new hole is created when all robots on an occupied node leave that node and no new robots occupy it. Robots will only move if (i) they identify that they are in a good chain and the multinode of the chain doesn't belong to any other good chain or (ii) they identify that they are in a good clockwise chain. 

These robots move such that every robot in a singleton moves towards the hole and one robot from the multinode moves towards the hole. Because no edge is missing in a good chain, the situation where robots leave one node and no robots move into that node does not occur.

Thus, when robots in a good chain move, they fill a hole. By Lemma~\ref{lem:vp-1-int-good-chain-exists}, we know that in every round until dispersion occurs there exists a good chain. Robots not in a good chain don't move and thus can't create new holes. Thus, in every round, the number of holes decreases by at least $1$.
\end{proof}

By Lemma~\ref{lem:vp-1-int-hole-dec}, we see in each round until dispersion is achieved, at least one hole is filled. Since there can be at most $n-1$ holes, it takes $O(n)$ rounds to achieve dispersion.
\end{proof}

%----------------------------------------------------------------------------------------------

\section{Dispersion without Chirality}
\label{sec:no-chir-vp-1-int}
% !TEX root = Deterministic-Dispersion-Dynamic.tex
 %Please leave the above line untouched. Thanks, Billy.

\subsection{Dispersion on Rings with Restricted Initial Configurations}\label{subsec:one-node-start}
 	Let us assume that we start with a configuration where all the robots are in the same node. Algorithm \emph{No-Chir-Preprocess}  causes all robots to develop the same sense of chirality. First all robots identify a common robot, in this case the robot with the least label. This robot will serve as the key to unifying the directions among all robots. This is achieved by making every robot move ``clockwise'' and then checking if their sense of ``clockwise'' matches the least label robot's sense of ``clockwise''. All robots that are in the same node as the robot with the least label have the same directions as that of the least label robot. If a robot is not in the same node, then that robot's sense of directions will be subsequently flipped so that they match those of the least label robot's.
 	
 	Note that even in the face of vertex permutation dynamism or 1-interval connectivity dynamism, the previous ideas hold true. All that matters is that the condition that robots with the same sense of direction as the least label robot end up in the same node and those with the opposite sense of direction end up in a different node holds true. In the case of vertex permutation dynamism, the adversary permutes the ring after the robots have moved and thus the required condition holds true. In case that the edge adjacent to the initial node with all the robots is removed by the adversary, those robots that tried to move along that edge will remain in the initial node and the other robots will move to a different node. Once again, the required condition holds true.
 	
 \alglanguage{pseudocode}
 \begin{algorithm}
 	\caption{No-Chir-Preprocess, run by each robot $u$}
 	\label{alg:no-chir-preprocess}
 	\begin{algorithmic}[1]
 		\State Remember robot with least label among robots in $u.node$. Call it $x$.
 		\State Move clockwise.
 		\State If $x$ is not present in your node, reverse your clockwise and counterclockwise directions.
 	\end{algorithmic}
 \end{algorithm}

\begin{lemma}\label{lem:no-chir-to-chir}
	Algorithm \emph{No-Chir-Preprocess} ensures that robots starting on the same node will gain chirality, even in the presence of both vertex permutation dynamism and 1-interval connectivity dynamism.
\end{lemma}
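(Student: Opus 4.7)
The plan is to analyze the execution of \emph{No-Chir-Preprocess} across the two rounds in which it operates, and show that at the end every robot has the same internal sense of clockwise as $x$ (the least-label robot at the shared initial node $v$). First I would articulate how the three lines map onto the synchronous round structure: line~1 and line~2 happen in the compute and move phases of round~1 respectively, while line~3 is the compute-phase update of round~2 performed after a fresh look. Line~3 is an internal state change and involves no movement, so the only physical motion during the algorithm occurs in the move phase of round~1.

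The core claim is that, immediately after the move in round~1, a robot is co-located with $x$ if and only if it shares $x$'s chirality. I would prove this by a short case analysis on the adversary's behaviour in round~1. Let $A$ and $B$ denote the two neighbours of $v$ in the round-1 ring (the identities of $A$ and $B$ depend on the vertex permutation, but that is the only effect permutation can have before the move), and suppose $x$'s own clockwise edge at $v$ leads to $A$. If no edge incident to $v$ is removed, then $x$ and every same-sense robot move to $A$, while every opposite-sense robot moves to $B$. If the edge $\{v,A\}$ is removed, then $x$ and the same-sense robots are blocked and remain at $v$, while opposite-sense robots reach $B$. If $\{v,B\}$ is removed, $x$ and the same-sense robots reach $A$ while the opposite-sense robots are blocked at $v$. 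In all three cases the partition of robots induced by ``co-located with $x$ at the end of round~1'' coincides exactly with the partition by chirality.

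Once this is in hand, round~2 is essentially immediate. No robot moves during round~2, so whatever further vertex permutation or edge removal the adversary performs, the physical placement of $x$ and of every other robot on the nodes is unchanged from the end of round~1. Hence the look phase of round~2 lets every robot correctly detect whether $x$ is on its node, and line~3 then flips the orientation of precisely the opposite-sense robots, aligning them with $x$. Since the same-sense robots did not flip, all robots now agree with $x$'s sense of clockwise, i.e.\ they have gained a common chirality.

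The main subtlety I expect is the modelling point that vertex permutation does not secretly ``move'' $x$ or any other robot; it merely relabels adjacencies between rounds. I would make this explicit up front, because once it is accepted, the dynamism in both rounds becomes harmless and the three-case analysis above carries the whole argument. The 1-interval-connectivity step in round~2 plays no role at all, since nobody moves; the interesting edge-removal case is entirely absorbed by the round-1 case split.
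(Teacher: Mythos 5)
Your proposal is correct and follows essentially the same route as the paper: move one step ``clockwise'' according to each robot's private orientation, use co-location with the least-label robot $x$ as the test for agreeing with $x$'s chirality, and flip the orientation of exactly the robots that fail the test; your three-way case split on which (if any) edge incident to the start node is removed mirrors the paper's informal handling of vertex permutation and 1-interval connectivity. Your version is simply a more explicit rendering of the same argument, with the round-structure bookkeeping spelled out.
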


\begin{proof}
	\emph{No-Chir-Preprocess} has all robots identify the robot with the least label $x$ and set their directions to match that robots sense of directions. It does this by having robots perform one round of movement and checking if $x$ moves in the same direction they did. For a given robot, this can be easily checked as if $x$ moved in the same direction, then it will be present on the same node as the robot. If $x$ and the robot did not move in the same direction, then the robot switches its sense of clockwise and anticlockwise directions. Thus, all robots will have the same sense of clockwise and anticlockwise.
\end{proof}

Due to Lemma~\ref{lem:no-chir-to-chir}, if all robots start in the same node and run \emph{No-Chir-Preprocess}, all robots will subsequently have chirality. Thus, we can then use algorithm \emph{VP-1-Interval-Chain}, developed in Section~\ref{sec:vp-1-int}, to achieve dispersion in $O(n)$ rounds in the presence of both vertex permutation dynamism and 1-interval connectivity dynamism.

\begin{theorem}\label{the:disp-wo-chir-same-node}
	When robots have full visibility, are initially present on the same node, and do not have chirality, we can achieve dispersion in $O(n)$ rounds in the presence of vertex permutation dynamism and 1-interval connectivity dynamism by running first \emph{No-Chir-Preprocess} and then \emph{VP-1-Interval-Chain}.
\end{theorem}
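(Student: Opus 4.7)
The plan is to compose the two building blocks already developed: one preprocessing round that synchronizes the robots' sense of direction, followed by an execution of \emph{VP-1-Interval-Chain} from the resulting configuration. Concretely, I would first invoke Lemma~\ref{lem:no-chir-to-chir} to conclude that after the single round of \emph{No-Chir-Preprocess}, every robot has reoriented its local notion of clockwise/anticlockwise to agree with the chirality of the minimum-label robot $x$; the lemma already handles the cases where the adversary removes an adjacent edge, so chirality unification holds against both vertex permutation dynamism and $1$-interval connectivity dynamism.

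Next, I would argue that the resulting configuration is a valid input to \emph{VP-1-Interval-Chain}. All $n$ robots began at a common node $v$ and each attempted exactly one move during preprocessing; because the adversary can remove at most one edge, after this round the robots occupy at most two nodes --- either the two neighbors of $v$ (if no adjacent edge was removed), or $v$ itself together with one of its neighbors (if an adjacent edge was removed, trapping the robots whose attempted move would have crossed it). In either case, for $n \geq 3$ at least one multinode remains, so dispersion has not yet been achieved and \emph{VP-1-Interval-Chain} still has useful work to do; the $n \leq 2$ cases are handled by direct inspection.

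Finally, applying Theorem~\ref{the:vp-1-int-chain-works} to the now chirality-equipped, full-visibility robots --- under the same adversarial model --- yields dispersion in $O(n)$ additional rounds, and adding the single preprocessing round leaves the bound at $O(n)$. The only genuine obstacle is ensuring that the chirality-unification in \emph{No-Chir-Preprocess} is robust to the adversary, in particular that every robot correctly detects whether $x$ followed it even when an adjacent edge is removed or the ring is permuted; but this is precisely what Lemma~\ref{lem:no-chir-to-chir} establishes, so the rest of the argument amounts to stitching together the two previously proved results and verifying that the one round of overhead is absorbed into the $O(n)$ bound.
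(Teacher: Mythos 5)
Your proposal is correct and follows essentially the same route as the paper: invoke Lemma~\ref{lem:no-chir-to-chir} for the one-round chirality unification, then apply Theorem~\ref{the:vp-1-int-chain-works} and absorb the preprocessing round into the $O(n)$ bound. The extra verification of the post-preprocessing configuration is harmless but unnecessary, since \emph{VP-1-Interval-Chain} handles arbitrary initial configurations.
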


\begin{proof}
	We are given that robots initially start in the same node. Thus, by Lemma~\ref{lem:no-chir-to-chir}, robots develop chirality in one round by running \emph{No-Chir-Preprocess}. By Theorem~\ref{the:vp-1-int-chain-works}, robots will then achieve dispersion in $O(n)$ rounds by running \emph{VP-1-Interval-Chain}. Thus, all robots achieve dispersion in $O(n)$ rounds.
\end{proof}

\subsection{Dispersion on Rings with Odd Number of Nodes}\label{subsec:odd-ring}
If the ring has an odd number of nodes, then even with an arbitrary initial distribution of robots to nodes, we are able to achieve dispersion. Robots running \emph{Achiral-Odd-VP-1-Interval-Chain} achieve dispersion in $O(n)$ rounds even in the presence of both vertex permutation and 1-interval connectivity dynamism. 

This algorithm is similar to the algorithm \emph{VP-1-Interval-Chain} in Section~\ref{sec:vp-1-int} with two key differences. The first difference is that if a robot is on a multinode that is part of two good chains and must move, tie-breaking is done first based on which chain is shorter in length and only then if needed on direction. The second difference is that if a robot is on a singleton node in a good chain, it checks if the multinode of the chain it belongs to belongs to another shorter or equal length good chain before deciding to move. 
These differences are the result of robots relying on chain size instead of a common sense of direction to unify their movements.

The result achieved by the algorithm is characterized by the following theorem.

\alglanguage{pseudocode}
\begin{algorithm}
	\caption{Achiral-Odd-VP-1-Interval-Chain, run by each robot $u$}
	\label{alg:achiral-odd-vp-1-interval-chain}
	
	\begin{algorithmic}[1]
		\While{there exists a multinode in the ring}
			\If{$u.node$ is singleton node and part of good chain}
				\If{multinode of the chain is also part of another good chain that is shorter or same in length}
					\State Don't move.
				\Else
					\State Move towards the hole of $u.node$'s chain.
			\EndIf

			\ElsIf{($u.node$ is a multinode) and ($u$ has least label among robots in $u.node$)}
				\If{$u.node$ is a part of two good chains}
					\If{both chains have same length}
						\State Move clockwise.
					\Else
						\State Move towards the hole of the shorter chain.
					\EndIf 
				\ElsIf{$u.node$ is a part of one good chain}
					\State Move towards the hole of the good chain.
				\EndIf

			\Else
				\State Don't move.
			\EndIf
		\EndWhile				
	\end{algorithmic}
\end{algorithm}

\begin{theorem}\label{the:achiral-odd-vp-1-interval-chain-works}
	When robots have full visibility and do not have chirality, Algorithm \emph{Achiral-Odd-VP-1-Interval-Chain} achieves dispersion on a ring of odd length in the presence of vertex permutation dynamism and 1-interval connectivity dynamism in $O(n)$ rounds.
\end{theorem}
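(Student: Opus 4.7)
My plan is to mirror the two-lemma structure of the proof of Theorem~\ref{the:vp-1-int-chain-works}. The good-chain existence argument of Lemma~\ref{lem:vp-1-int-good-chain-exists} carries over verbatim to this setting, since it uses only the coexistence of a hole and a multinode and the fact that at most one edge is removed per round; neither ingredient depends on chirality. What changes is the progress lemma, which I would replace by an amortized statement: every two consecutive rounds reduce the hole count by at least one.

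I would obtain the amortized progress via a case analysis on the movements induced by \emph{Achiral-Odd-VP-1-Interval-Chain} at each multinode $M$. When $M$ lies on exactly one good chain, its least-label robot and every singleton of that chain march one step towards the chain's hole, filling it without creating any new hole (the singleton-move rule correctly activates here because $M$ has no second good chain). When $M$ lies on two good chains of strictly different lengths, only the shorter chain's singletons and $M$'s least-label robot march, filling the shorter chain's hole while the longer chain's robots stay in place. Both of these cases strictly decrease the hole count within the round, exactly as in the chiral proof.

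The main obstacle is the remaining subcase, where $M$ is on two good chains of the same length $L$. Here the algorithm sends $M$'s least-label robot in its own (only locally meaningful) ``clockwise'' direction while all singletons in both chains stay put. If $L=2$, the moving robot directly fills one of the two adjacent holes. If $L\geq 3$, no hole is filled in the present round, but the moving robot creates a fresh multinode at the first singleton of the chosen side, and the two good chains induced at this fresh multinode now have lengths $L-1$ and $L+1$ (or only one of them is a valid chain, in the subcase where the old $M$ retains enough robots to remain a multinode). In either outcome the fresh multinode is \emph{not} in a same-length tie, so by the preceding case it fills a hole in the next round. Charging each stalled round to its mandatory hole-filling successor yields progress at a rate of at least one hole per two rounds, and hence at most $2(n-1)=O(n)$ rounds overall.

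The principal technical difficulty is verifying that this amortization survives when several multinodes simultaneously sit in same-length ties, and this is precisely where the odd-length hypothesis becomes essential. On an even ring one can construct persistent globally mirror-symmetric configurations in which the algorithm cycles indefinitely (for instance, on the $4$-cycle with two multinodes separated by two holes), so the parity of $n$ must enter somewhere. I would argue that a configuration in which every same-length-tie multinode fails to induce a hole-filling successor in the following round must be globally mirror-symmetric across some axis of the ring, and then use the parity of $n$ to rule out such a symmetry: on an odd ring one of the two arcs defined by the axis has an odd number of nodes, which is incompatible with the required pairing of equal-length chains on opposite sides of each tie multinode. Carrying out this combinatorial case split cleanly, especially in the presence of an adversarially chosen missing edge and arbitrary vertex permutations, is the step I expect to be the main obstacle.
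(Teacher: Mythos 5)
Your case analysis of the single-good-chain and unequal-length cases matches the paper, but the way you handle the equal-length tie is where the argument breaks. Your two-round amortization assumes that the configuration produced by a stalled tie round persists into the next round, so that the ``fresh multinode'' finds itself on chains of lengths $L-1$ and $L+1$ and is forced to fill a hole. Under vertex permutation dynamism this is false: when a tied multinode with exactly two robots slides its least-label robot onto an adjacent singleton, the multiset of node multiplicities is unchanged (one node goes from $2$ to $1$, its neighbour from $1$ to $2$, holes untouched), so the adversary can simply permute the ring at the start of the next round to recreate an identical tied configuration. The charging scheme ``each stalled round pays for its hole-filling successor'' therefore never collects. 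Your proposed rescue via a global mirror-symmetry characterization of two-round stalls is both unproved and, I believe, not the right invariant; you correctly sense that parity must enter, but not where.

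The paper avoids cross-round reasoning entirely with a single-round dichotomy (its Lemma~\ref{lem:hole-dec-multi-inc}): in every round, either the number of holes strictly decreases, or it stays the same and the number of multinodes strictly increases. The only way both can fail is if \emph{every} multinode has multiplicity exactly two and lies on two equal-length good chains (a tied multinode with $\geq 3$ robots spawns a new multinode while remaining one itself; any other multinode fills a hole). A counting argument then shows such a configuration forces an alternating multinode/hole pattern with the two singleton arcs at each multinode of equal length, so $n = 2k + s$ with $s$ even --- impossible on an odd ring. This is where oddness is used, and it is a statement about one round, immune to the adversary's permutation. Your proposal is also missing the second half of the analysis that this dichotomy necessitates: multinode-increasing rounds must themselves be counted, which the paper does by bounding them by $n/2$ (the maximum possible number of multinodes) plus the total number of multinode decreases, the latter charged one-for-one to hole fillings (its Lemma~\ref{lem:hole-occupy-multi-dec}), yielding $O(n)$ overall. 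Without both the single-round dichotomy and this potential-function bookkeeping, your argument does not go through.
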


\begin{proof}
We first argue that the number of holes will eventually decrease until none are left and subsequently bound the time this will take.

We argue the eventual decrease in holes as follows. We show, by Lemma~\ref{lem:hole-dec-multi-inc}, that in each round, either (i) the number of holes decreases or (ii) the number of holes remains constant and the number of multinodes increases. Since there is an upper bound of $n/2$ to the number of multinodes that can exist in an $n$ node ring, eventually it will become impossible for the number of multinodes to increase. Thus, the number of holes can only decrease from that round onwards and eventually dispersion will be achieved.

\begin{lemma}\label{lem:hole-dec-multi-inc}
	In each round, either (i) the number of holes decreases or (ii) the number of holes remains constant and the number of multinodes increases.
\end{lemma}
\begin{proof}
	We show that if the lemma does not hold, it implies that the ring is of even size. Because the ring is not of even size and by the contrapositive of the previous statement, our lemma holds true. We first enumerate the different types of movements possible by robots and subsequently show that only one type of movement will lead to the lemma not holding.
	
	We know that there will exist at least one multinode and one hole if dispersion has not occurred and hence at least one good chain. From the algorithm, it is seen that the least label robot $u$ of a multinode that is part of at least one good chain executes a move in every round. This robot either moves towards the hole of one of its good chains or moves clockwise. 
	
	$u$ moves towards the hole of a given good chain $c$ if either $u.node$ is part of exactly one good chain $c$ or $u.node$ is part of two good chains and $c$ is shorter in length. In either case, all robots in singleton nodes, if any, in $c$ move towards the hole too and hence a hole gets occupied without creating any new holes. This decreases the number of holes. 
	
	If $u.node$ belongs to two good chains of equal length, then $u$ moves clockwise. However, the robots of the singleton nodes, if any, of both good chains remain stationary. This means that $u$ moves into its neighboring node which could be a hole or a singleton node. If it is a hole, it gets filled and hence the number of holes decreases. If it is a singleton, it becomes a multinode. If the multinode $u$ was initially on had more than $2$ robots, then this movement of $u$ leads to an increase in the number of multinodes, else the number of multinodes remains the same. 
	
	The lemma does not hold if either (i) the number of holes increases or (ii) the number of holes does not decrease and the number of multinodes does not increase. Condition (i) can never occur as robots move only if their nodes are part of a good chain. When the robots on the nodes of a good chain move, it is not possible to create a hole. Thus, we only consider condition (ii). Now, the only configurations where condition (ii) occurs in a round is when all the multinodes belong to two equal length good chains and have multiplicity exactly two. This follows from our previous enumeration of the different types of movements of robots possible. As the number of holes doesn't decrease, each multinode should belong to two equal length good chains. And as the number of multinodes remains constant, each multinode should have exactly two robots. 
	
	Let us assume that there are $k$ multinodes initially. Since each multinode has exactly two robots, there are $k$ holes in the ring. Each multinode has two choices of holes to fill. Hence $k$ multinodes on a whole have $2k$ choices but there are only $k$ holes in ring. This condition can only be satisfied if each hole is shared by two multinodes. Hence there needs to be an alternating pattern of multinodes and holes, i.e. every pair of holes must be separated by at least one multinode and every pair of multinodes must be separated by at least one hole. Since each multinode must belong to two equal length chains, the number of singleton nodes $s$ must be even due to this alternating pattern of multinodes and holes. Thus, the total number of nodes of the ring $s + 2k$ is an even number. However, we assume that the ring is of odd size and thus the lemma will hold.
\end{proof}

We now bound the running time of the algorithm. In each round, either the number of multinodes increases and number of holes remains constant, or the number of holes decreases. Let $rds\_multi\_inc$ be the number of rounds that the number of multinodes can increase while the number of holes remains constant. Let $rds\_hole\_dec$ be the number of rounds that the number of holes decreases. The running time of the algorithm is $\leq rds\_multi\_inc+rds\_hole\_dec$. We bound each of these terms to get an asymptotic value for the running time.

We can easily see that $rds\_hole\_dec \leq n-1$. We show that $rds\_multi\_inc$ is upper bounded by two values, which we bound in turn. Let $diff\_initial\_max\_multi$ be the difference between the number of multinodes in the initial configuration and the maximum number of multinodes possible. Let $total\_multi\_dec$ be the total decrease in the number of multinodes over the course of the algorithm. The number of multinodes can only increase up to a certain maximum value. However, every time the number of multinodes decreases, they can subsequently increase in the future. Thus, it is clear that $rds\_multi\_inc \leq diff\_initial\_max\_multi + total\_multi\_dec$. We now individually bound these two terms. 

The total number of multinodes that can exist in an $n$ node ring is $n/2$. Thus $diff\_initial\_max\_multi \leq n/2$. We now focus on bounding $total\_multi\_dec$. When the number of holes decreases, the number of multinodes can either increase, decrease, or stay the same. We show, by Lemma~\ref{lem:hole-occupy-multi-dec}, if in a given round the number of holes decreases by some number $x$, the number of multinodes can also decrease by at most $x$. Therefore, over the course of the algorithm, the maximum decrease in the number of multinodes is $\leq n-1$. Thus, $total\_multi\_dec \leq n-1$.

Therefore, the running time of the algorithm is 
\begin{align*}
&\leq rds\_multi\_inc+rds\_hole\_dec \\
&\leq diff\_initial\_max\_multi + total\_multi\_dec+rds\_hole\_dec \\
&\leq n/2 + n-1 + n-1\\
&= O(n) \text{ rounds.}
\end{align*} 

\begin{lemma}\label{lem:hole-occupy-multi-dec}
	In each round, for every hole that becomes occupied, there can be at most a decrease of one multinode.
\end{lemma}
\begin{proof}
	Any hole will be a part of at most two good chains since the graph is a ring. It can become occupied by either receiving a robot from exactly one good chain or from both good chains. If it receives a robot from exactly one good chain, it is possible for only the multinode in that chain to become a singleton node. If it receives a robot from both good chains, then the hole itself becomes a multinode and at most both multinodes of the chains become singletons. Thus, for every hole that becomes occupied in a given round, the number of multinodes in the graph decreases by at most one.
\end{proof}

 Thus the theorem is proved true.
\end{proof}

\subsection{Dispersion on a Ring of Size $4$}\label{subsec:even-ring}
We now present algorithm \emph{Achiral-Even4-VP-1-Interval-Chain} to achieve dispersion on a ring of size $4$ when robots start in an arbitrary initial configuration. This algorithm bears a strong similarity to \emph{Achiral-Odd-VP-1-Interval-Chain} as in most configurations the algorithms work similarly. 

However, the key difference is the handling of the configuration where a multinode has two singleton nodes adjacent to it, and the edges between the multinode and both singleton nodes are present in the round. This situation, when presented to the algorithm \emph{Achiral-Odd-VP-1-Interval-Chain}, would result in a configuration that can be manipulated by the adversary to again reach this situation. In other words, the adversary could then force the robots to run the algorithm forever without ever achieving dispersion. We handle this particular situation by having robots in the singleton nodes move to the multinode and then subsequently having all robots run \emph{No-Chir-Preprocess} and \emph{VP-1-Interval-Chain}.

Thus we are able to achieve dispersion in $O(1)$ rounds on a ring of size $4$, as captured by the following theorem.

\alglanguage{pseudocode}
\begin{algorithm}
	\caption{Achiral-Even4-VP-1-Interval-Chain, run by each robot $u$}
	\label{alg:achiral-even4-vp-1-interval-chain}
	\begin{algorithmic}[1]
		\While{there exists a multinode in the ring}
			\If{$u.node$ has all $4$ robots on it}
				\State \emph{No-Chir-Preprocess}
				\State \emph{VP-1-Interval-Chain}
				
			\ElsIf{$u.node$ is singleton node}
				\If{$u.node$ is part of a good chain}
					\If{the multinode of the chain is also part of another good chain that is shorter}
						\State Don't move.
					\ElsIf{multinode of the chain is also part of another good chain that is same in length}
						\State Move towards the multinode.
					\Else
						\State Move towards the hole of $u.node$'s chain.
					\EndIf
				\EndIf
		
			\ElsIf{($u.node$ is a multinode) and ($u$ has least label among robots in $u.node$)}
				\If{$u.node$ is a part of two good chains}
					\If{chains have different length}
						\State Move towards the hole of the shorter chain.
					\ElsIf{$u.node$ is adjacent to two holes}
						\State Move clockwise. 
					\EndIf 
				\ElsIf{$u.node$ is a part of one good chain}
					\State Move towards the hole of the good chain.
				\EndIf
		
			\Else
				\State Don't move.
			\EndIf
		\EndWhile					
	\end{algorithmic}
\end{algorithm}

\begin{theorem}\label{the:achiral-even4-vp-1-interval-chain}
	When robots have full visibility and do not have chirality, Algorithm \emph{Achiral-Even4-VP-1-Interval-Chain} achieves dispersion on a ring of size $4$ in the presence of vertex permutation dynamism and 1-interval connectivity dynamism in $O(1)$ rounds.	
\end{theorem}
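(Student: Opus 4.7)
Since the ring has only four nodes, the state space of multiplicity profiles is constant-sized, so the plan is an exhaustive case analysis with the concrete termination goal of showing that every non-dispersed configuration reaches either a dispersed state or the ``all four robots on one node'' state within a constant number of rounds, after which Theorem~\ref{the:disp-wo-chir-same-node} completes the argument in $O(n) = O(1)$ more rounds.

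First I would enumerate the multiplicity profiles up to cyclic symmetry: the trivially dispersed $(1,1,1,1)$; the profile $(4)$ handled directly by the algorithm's top branch; the profile $(3,1)$ with the singleton either adjacent to or opposite the $3$-multinode; the profile $(2,2)$ with the two multinodes either adjacent (cyclic pattern $M,M,H,H$) or opposite (pattern $M,H,M,H$); and the profile $(2,1,1)$ with the distinct cyclic arrangements $M,S,S,H$ and $M,S,H,S$. Since $n=4$, in every round the adversary can remove at most one edge and apply any vertex permutation, so the total number of (configuration, adversary-move) pairs to examine is small and can be exhausted by hand.

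The heart of the proof is the pattern $M,S,H,S$, identified by the authors as the obstacle that caused \emph{Achiral-Odd-VP-1-Interval-Chain} to fail on even-length rings. In this configuration the multinode belongs to two good chains of equal length $2$ but is not adjacent to two holes, so under \emph{Achiral-Even4-VP-1-Interval-Chain} the multinode's least-label robot stays put while both singletons march onto the multinode. I would verify that regardless of the adversary's single-edge choice, this collapses the state in one round either to the $(4)$ profile (when the adversary does not cut either of the two chain edges incident to the multinode) or to a $(3,1)$ profile with the singleton adjacent to the $3$-multinode (when the adversary cuts exactly one of those two incident edges, so only one singleton joins in). In the $(4)$ case we invoke Theorem~\ref{the:disp-wo-chir-same-node}; in the $(3,1)$ case we argue by reduction to the already-analyzed $(3,1)$ adjacent profile below.

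Finally, for every remaining profile I would trace one round of the algorithm against each possible single-edge removal and check that the resulting configuration is either dispersed, the $(4)$ profile, the $M,S,H,S$ pattern handled above, or another enumerated profile with strictly fewer holes. On a $4$-ring the multinode of any such remaining profile is adjacent to at least one hole through a good edge, since the adversary can kill only one of the at most two incident hole-edges, so the algorithm's rule ``move toward the hole of the shorter good chain'' applies and either strictly decreases the hole count or deposits the configuration into one of the already-analyzed terminal states. The main obstacle is the bookkeeping of adversarial edge removals in the cases where the multinode belongs to two good chains of unequal length, where I must confirm that the singletons in the longer chain --- which the algorithm instructs to stay put --- are not stranded by a subsequent adversarial choice into a non-terminating loop; because $n=4$ this reduces to a finite table of transitions that I would exhibit explicitly, yielding the claimed $O(1)$ bound.
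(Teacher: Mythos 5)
Your overall structure -- an exhaustive case analysis over the constantly many multiplicity profiles of a $4$-ring, with the $(4)$ profile discharged via Theorem~\ref{the:disp-wo-chir-same-node} and the $M,S,H,S$ pattern singled out as the crux -- is exactly the decomposition the paper uses (its states 1--4 are your profiles $(4)$, $(3,1)$, $(2,1,1)$, $(2,2)$, and it likewise routes everything through the ``state 3'' profile and then to state 1 or dispersion). However, your trace of the central $M,S,H,S$ case under an adversarial edge removal is wrong. You claim that if the adversary cuts one of the two edges incident to the multinode, ``only one singleton joins in,'' producing a $(3,1)$ profile. That is not what \emph{Achiral-Even4-VP-1-Interval-Chain} does: a singleton moves \emph{toward the multinode} only when the multinode's other chain is a \emph{good} chain of the same length. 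Once any one of the four ring edges is cut (note: cutting a singleton--hole edge also breaks a chain, a case your analysis omits by restricting attention to the two multinode-incident edges), exactly one of the two chains is bad, so the multinode belongs to exactly one good chain and its least-label robot moves toward the hole, the singleton of the good chain moves toward the hole, and the singleton of the bad chain stays put. The result is dispersion in that very round, not a $(3,1)$ configuration.

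This particular error is ``conservative'' -- the true outcome is strictly better than the one you claim -- so termination is not actually endangered here. But it reveals a misreading of the guard conditions that would propagate into the ``finite table of transitions'' you propose to exhibit for the remaining profiles, and your stated invariant for those profiles (``the multinode is adjacent to at least one hole through a good edge'') is also not the right one, since in profiles like $M,S,S,H$ the multinode's relevant access to a hole is through a chain of length greater than one; the correct invariant, as in the paper's Lemmas~\ref{lem:vp-1-int-good-chain-exists} and~\ref{lem:state3-disp}, is that at least one \emph{good chain} survives any single edge removal and its robots fill the hole without creating a new one. Rebuild the transition table with the guards read correctly (singletons advance toward the multinode only when \emph{both} equal-length chains are good; otherwise the unique good chain fills its hole) and your argument becomes the paper's proof.
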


\begin{proof}
	Recall that we are dealing with a ring of size $4$ and by extension, only $4$ robots. Let us characterize the state of the system by the number of multinodes and singleton nodes present in the system. Any system where dispersion has not yet been achieved will be in one of the following four states: 
	\begin{enumerate}
		\item There is one multinode with $4$ robots on it.
		\item There is one multinode with $3$ robots on it and one singleton node.
		\item There is one multinode with $2$ robots on it and two singleton nodes.
		\item There are two multinodes with $2$ robots on each.
	\end{enumerate} 
	
	We prove that dispersion is achieved in $O(1)$ rounds from any of these four states using \emph{Achiral-Even4-VP-1-Interval-Chain}. Note that our goal is dispersion and so a system already in that state implies our algorithm achieves dispersion in $O(1)$ rounds. 
	
	It is clear to see that when the system is in state 1, the condition in Line~2 of the algorithm is satisfied and \emph{No-Chir-Preprocess} and \emph{VP-1-Interval-Chain} will be executed. By Theorems~\ref{the:disp-wo-chir-same-node}, and given $n=4$, robots in state 1 achieve dispersion in $O(1)$ rounds.
	
	For states 2-4, we first show, in Lemma~\ref{lem:states24-to-3}, that any system in state 2 or 4 changes to state 3 or a state of dispersion in one round. We then show, in Lemma~\ref{lem:state3-disp}, that a system in state 3 achieves dispersion in $O(1)$ rounds.
	
	\begin{lemma}\label{lem:states24-to-3}
		If the configuration of robots on nodes is characterized by state 2 or 4, then after robots run \emph{Achiral-Even4-VP-1-Interval-Chain} for one round, the system will reach state 3 or a state of dispersion.
	\end{lemma}
	
	\begin{proof}
		We consider each state in turn, and show that in one round of executing \emph{Achiral-Even4-VP-1-Interval-Chain}, we arrive at state 3 or a state of dispersion.
		
		Consider a system in state 2. There is one multinode and one singleton node. Either these two nodes are (i) opposite to each other or (ii) adjacent to each other. 
		
		If the multinode and singleton node are opposite, then only the least label robot in the multinode will move in the round. Regardless of if the adversary makes one of the chains of the multinode a bad chain, the least label robot will move to the hole in one of the chains and state 3 will be reached in one round.
		
		If the multinode and singleton node are adjacent to each other, then depending on which edge the adversary removes, if any, there will either be exactly one good chain or there will be a smaller good chain and a bigger good chain. In both cases, exactly one hole will be filled in the round and the system will change to state 3.
		
		Consider a system in state 4. Either the two multinodes are (i) opposite to each other or (ii) adjacent to each other.
		
		If the multinodes are opposite to each other, then depending on if the adversary removes an edge or not, the condition of either Line~17 or Line~19 is satisfied for exactly one robot on each multinode. Now either dispersion is achieved or the system changes to state 3 in one round.
		
		If the multinodes are adjacent to each other, then depending on which edge the adversary removes, if any, either one or both multinodes will be a part of a good chain. If only one multinode belongs to a good chain, then state 3 is reached in one round. If both multinodes belong to good chains, then dispersion is achieved in one round.
		
		Thus it is clear that, whether the system is in state 2 or 4, in one round, the system will change to state 3 or a state of dispersion.		
	\end{proof}
	
	Now that we've shown that systems in state 2 or 4 change to state 3 or a state of dispersion in one round, all that's left to show is that we can achieve dispersion in $O(1)$ rounds once we reach state 3.
	
	\begin{lemma}\label{lem:state3-disp}
		If the configuration of robots on nodes is characterized by state 3, then after robots run \emph{Achiral-Even4-VP-1-Interval-Chain} for $O(1)$ rounds, they will achieve dispersion.
	\end{lemma}
	
	\begin{proof}
		We show that state 3 reaches either a state of dispersion or state 1 in one round of executing \emph{Achiral-Even4-VP-1-Interval-Chain}. When the system is in state 3, either (i) the multinode will either be adjacent to both singleton nodes or (ii) the multinode will be adjacent to one singleton node and a hole.
		
		Consider case (i). Now, the adversary can either cause one of the chains the multinode belongs to be a bad chain or allow both chains to be good. If both chains are good chains, the robots in the multinode won't move and the robots in the singleton nodes will move towards the multinode, thus bringing the system to state 1 in one round. If only one chain is a good chain, then the robot on the singleton of the bad chain remains stationary while the least label robot of the multinode and the robot on the singleton node of the good chain move toward the hole. Thus a state of dispersion is reached in one round.
		
		Consider case (ii). Now, the adversary can either cause one of the chains the multinode belongs to be be a bad chain or allow both chains to be good. If both chains are good chains, then since one chain is shorter than the other, the least label robot in the multinode moves to the hole and a state of dispersion is reached in one round. If only one chain is a good chain, then still, the least label robot of the multinode and the robots in the singleton nodes in the good chain, if any, move toward the hole and a state of dispersion is reached in one round.
		
		Thus we see that robots in a configuration characterized by state 3 reach either a state of dispersion or state 1 in one round of executing \emph{Achiral-Even4-VP-1-Interval-Chain}. As we saw earlier, once the system is in state 1, robots will achieve dispersion in $O(1)$ rounds. Thus, the lemma is proved.
	\end{proof}
	
	Thus, the robots on a ring of size $4$ will achieve dispersion in $O(1)$ rounds, regardless of the starting configuration of the robots on the nodes.
\end{proof}

%----------------------------------------------------------------------------------------------

\section{Impossibility of Dispersion with No Visibility}
\label{sec:no-vis-imp}
% !TEX root = Deterministic-Dispersion-Dynamic.tex
 %Please leave the above line untouched. Thanks, Billy.

In this section we look at the impossibility of dispersion in the face of the two types of dynamism when robots have no visibility. Note that these impossibility results hold even if the robots possess chirality. Furthermore, we even allow the robots a little more power than our model for the possibility results in previous sections. We allow robots co-located on a node to detect not just the robot with the least label among them, but also the robot with the second least label among them. Even with this added power, dispersion is still impossible. The key idea to the impossibility results is that the adversary is able to predict the robots' movements in a given round before they move and so modify the ring to prevent dispersion from being achieved. First, we note the following observation about the dependence of a robot's move on only local information.

\begin{observation}\label{obs:local-info}
If robots have no visibility, then in a given round, for every robot $u$, its choice of move depends only on information present in $u.node$.
\end{observation}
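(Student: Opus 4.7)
The plan is to derive the observation directly from the view definition laid out in the preliminaries by specializing it to $k = 0$. Recall that a robot $u$ with $k$-visibility sees the vector $\langle CLOCKWISE, ANTI\text{-}CLOCKWISE, MULTIPLICITY, MISSING\text{-}EDGE\rangle$, where every component is restricted to nodes within distance $k$ of $u.node$. First I would instantiate each of the four components for $k = 0$: the $CLOCKWISE$ and $ANTI\text{-}CLOCKWISE$ lists are empty (there are no occupied nodes other than possibly $u.node$ within distance $0$), the $MULTIPLICITY$ entry reflects only whether $u.node$ itself is a multinode (otherwise it is $-1$), and $MISSING\text{-}EDGE$ carries no information about edges beyond $u.node$.

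Next I would catalogue the remaining data available to $u$ when it computes its move: its own identifier, the labels of robots co-located on $u.node$ (together with the extra power to identify the least- and second-least-labelled robot, and the strong local multiplicity count), and any persistent memory carried by $u$ itself. Each of these items is either intrinsic to the robot or obtained by the look operation restricted to $u.node$. Combining this catalogue with the specialized view vector shows that every input to the robot's decision in a given round is a function of the state of $u.node$ in that round, which is exactly the content of the observation.

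The only subtlety, and the step I expect to spend the most care on, is the interpretation of the $MISSING\text{-}EDGE$ entry when $k = 0$: one has to argue that this entry cannot smuggle in information about the two edges incident to $u.node$, since such an edge connects $u.node$ to a distance-$1$ neighbour and therefore lies outside the radius-$0$ visibility ball. Once this is pinned down by a careful reading of the definition, no computation remains and the observation follows in one line.
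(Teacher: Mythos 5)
Your proposal is correct and takes essentially the same route as the paper, which states this observation without a separate proof as an immediate consequence of the definition of no visibility ($k=0$), under which a robot's entire view collapses to the contents of $u.node$. Your extra care with the $MISSING\text{-}EDGE$ component is a reasonable and harmless elaboration of what the paper leaves implicit.
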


Observation~\ref{obs:local-info} implies that any robot $u$'s move in a given round does not depend on information from nodes other than $u.node$. Thus, the adversary may change the ring, either by permuting the nodes or removing an edge, and $u$'s choice of movement wouldn't be affected. This allows the adversary to predict how the robots will move and prevent dispersion from occurring by ensuring the presence of either a multinode or a hole in every round.

\subsection{With Vertex Permutation}
\begin{theorem}\label{the:no-vis-vp-imp}
	If robots have no visibility, it is impossible to achieve deterministic dispersion in a ring with $\geq 3$ nodes in the presence of vertex permutation dynamism.
\end{theorem}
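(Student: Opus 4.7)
The plan is to exhibit an adversary---namely, an initial configuration together with an online rule for choosing the vertex permutation each round---that prevents dispersion forever. The linchpin is Observation~\ref{obs:local-info}: in each round, every robot's choice (move CW, stay, or move CCW) is a deterministic function of only its own label together with the multiplicity and label set on its current node, all of which are invariant under any permutation the adversary applies. Consequently, the adversary may first simulate the algorithm---learning, for each currently occupied node $w$, the triple $(c_w, s_w, a_w)$ of robots wanting to move clockwise, stay, or move counter-clockwise---and only then commit to a vertex permutation that thwarts dispersion.

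I would maintain the invariant that after every round there exists at least one multinode (equivalently, at least one hole, since the number of robots equals the number of nodes). The initial configuration has all $n$ robots on a single node when $n \geq 4$, and a $2$-$1$-$0$ configuration when $n = 3$; both satisfy the invariant. For the inductive step, I would analyze cases on the triples $(c_w, s_w, a_w)$. If some occupied $w$ satisfies $\max\{c_w, s_w, a_w\} \geq 2$, a multinode is automatically created at the clockwise neighbor of $w$, at $w$ itself, or at the counter-clockwise neighbor of $w$ respectively, independent of the permutation. Otherwise every occupied node has at most $3$ robots with each action group of size at most $1$, and the adversary uses its freedom over the CW/CCW neighbor assignment to engineer a collision---for instance by choosing a node $x$ to be simultaneously the CW neighbor of some $w_i$ and the CCW neighbor of some $w_j$ (with $i \neq j$), so that a CW-mover from $w_i$ and a CCW-mover from $w_j$ both arrive at $x$---thereby producing a multinode at $x$.

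The main obstacle I anticipate is realizing the desired collision while respecting the global constraint that the chosen CW/CCW assignment defines a single Hamilton cycle on all $n$ nodes. For $n \geq 4$, the ring has enough slack to carry out such merges by a direct combinatorial construction. The tight case is $n = 3$, where only two distinct cyclic orientations are available; I would handle this by a direct finite enumeration of action vectors in $\{C, S, A\}^3$ starting from the $2$-$1$-$0$ configuration, verifying that for every action vector at least one of the two orientations retains a multinode. Combining this enumeration with the inductive argument maintains the invariant forever, proving the impossibility of deterministic dispersion.
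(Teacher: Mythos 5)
Your high-level strategy coincides with the paper's: invoke Observation~\ref{obs:local-info} so the adversary can simulate the robots' decisions first and only then commit to a permutation that keeps a multinode alive. However, there are two concrete gaps. First, for $n=3$ the invariant ``a multinode survives each round'' is not strong enough. From the gathering configuration (all three robots on one node) the robots \emph{can} disperse in one further round no matter how the adversary permutes: the three co-located robots are mutually distinguishable, so one moves clockwise, one moves counter-clockwise, and one stays, and no permutation of a $3$-ring prevents the two movers from landing on the two distinct empty neighbors. Your proposed enumeration only verifies that ``at least one of the two orientations retains a multinode,'' and the gathering configuration \emph{is} a multinode; hence your adversary rule may legally steer the system into the $3$-$0$-$0$ state (e.g., both multinode robots stay while the singleton's robot walks into them) and then lose on the next round. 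The $n=3$ case needs the stronger invariant the paper uses: always remain in a one-multinode/one-singleton/one-hole (``neutral'') configuration, explicitly avoiding both dispersion and gathering.

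Second, for $n\ge 4$ the entire weight of the theorem rests on the step you defer. When every action group on every occupied node has size at most one, you assert that ``the ring has enough slack'' to realize a collision, but the one mechanism you exhibit --- a CW-mover from $w_i$ and a CCW-mover from $w_j$ meeting at a common neighbor $x$ --- does not cover all cases: if every moving robot moves clockwise there is no CCW-mover to collide with, and you must instead route a mover onto a node that retains a robot, or exploit the holes directly (the paper's $\ge 3$-holes case is settled by simply making the holes consecutive, so the middle hole can never be filled). The paper discharges this crux by an explicit case analysis on the number of holes ($1$, $2$, and $\ge 3$, with subcases on how the surplus robots are distributed); your sketch names the obstacle but does not carry out the construction, so the argument is incomplete exactly where the difficulty lies. (Restricting to the all-robots-on-one-node initial configuration is legitimate for an impossibility result and does make round one immediate by pigeonhole on the three possible actions, but it buys nothing in later rounds, where general configurations recur and the full case analysis is still needed.)
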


\begin{proof}
	We argue the impossibility result as follows. Recall that the connections in the graph change in every round. When the adversary permutes the ring, only the positions of robots relative to other robots in the graph changes, but there is no change to robots co-located on the same node. We first note that for a given robot $u$, by Observation~\ref{obs:local-info}, its choice of move depends only on information accessible to it on $u.node$. Thus, the adversary knows each robot's plan to move before permuting the ring. The adversary leverages this information to ensure the presence of at least one multinode in every round, which prevents dispersion of robots. Due to a slight difference in possible configurations of the ring to avoid, we divide the proof into two parts: for those rings of size $3$ and those of size $\geq 4$.
	
	\begin{lemma}\label{lemma::no-vis-vp-imp-n3}
		If robots have no visibility, it is impossible to achieve deterministic dispersion in a ring with $3$ nodes in the presence of vertex permutation dynamism.
	\end{lemma}
	
	\begin{proof}
		For rings of size 3, it is not enough to ensure that a multinode exists in each round. We must also ensure that all 3 robots do not end up on the same node, henceforth called a \textbf{gathering configuration}, else the adversary will be unable to prevent dispersion from occurring. This is because once such a configuration is reached, all robots are present in the same node and permuting the nodes does not change that. After the vertex permutation, the robots can coordinate such that one robot goes through each edge and one robot stays in the node, thus achieving dispersion.
		
		Thus, the adversary should ensure that only ring configurations where there exists one hole, one multinode and one singleton node occur in every round. Let us call all such configurations \textbf{neutral configurations}. Let us assume the ring is initially in one such neutral configuration. We now show that regardless of how the robots plan to move, the adversary has a way to permute the ring so that robots are always in a neutral configuration.
		
		If the movement of robots will lead to a neutral configuration, the adversary does not need to permute the ring. 
		
		In case the movement of robots will lead to a gathering configuration, the adversary performs the following operations.
		There are three possible ways to achieve a gathering configuration: (i) the robot from the singleton node moves towards the multinode, (ii) both robots of the multinode move towards the singleton (iii) all robots move towards the hole. Cases (i) and (ii) are similar and require the adversary to swap the multinode and the singleton node. This results in the robot(s) which performs the move operation to move towards the hole and a neutral configuration being maintained. In case (iii), the adversary swaps the multinode and singleton. This results in the robot of the singleton node moving towards the multinode and the robots of the multinode moving towards the singleton. Hence, a neutral configuration is still maintained.
		
		In case the movement of robots will lead to dispersion, the adversary performs the following operations.
		In order to achieve dispersion one robot should move into the hole and there should be one robot less in the multinode at the end of the round. This is possible if (i) one robot moves out of the multinode while the other stays in it and the robot in the singleton node does not move or (ii) both robots in the multinode move out of it and the robot in the singleton node moves into the multinode. 
		In case (i), the adversary can stay in a neutral configuration by swapping the multinode and the hole. The robot leaving the multinode will move into the singleton node and make it a multinode with exactly 2 robots in it. 
		In case (ii), the adversary can stay in a neutral configuration by swapping the singleton node and the hole. This results in the multinode becoming a hole, the singleton node becoming a multinode, and the hole becoming a singleton.
	\end{proof}
	
	\begin{lemma}\label{lemma::no-vis-vp-imp-n4more}
		If robots have no visibility, it is impossible to achieve deterministic dispersion in a ring with $\geq 4$ nodes in the presence of vertex permutation dynamism.
	\end{lemma}
	
	\begin{proof}
	All ring configurations where dispersion hasn't occurred can be reduced to three types, based on the number of holes in the ring. The three cases are when the number of holes are $1$, $2$, and $\geq 3$. In each of these three cases, we show that if the robots can move given the current ring configuration and achieve dispersion, the adversary has a way to permute the nodes such that dispersion won't occur. We do this by showing that the adversary can always maintain a multinode or a hole.
	
	Note that if the movement of robots does not lead to dispersion, the adversary can just maintain the ring as it is. Only in case the movement might lead to dispersion should the adversary perform the following operations.\\
	
	\textbf{For $1$ hole case:}
	
	When there is exactly $1$ hole in the network, there will be exactly one multinode with exactly $2$ robots on it. In order to achieve dispersion in a given round from this configuration, robots should move such that the following two conditions are satisfied: (i) exactly one robot moves into the hole from a neighboring node of the hole and (ii) the total number of robots present on the multinode at the end of the round should be exactly $1$. 
	
	\begin{figure}
	\includegraphics[page=1,height=2in]{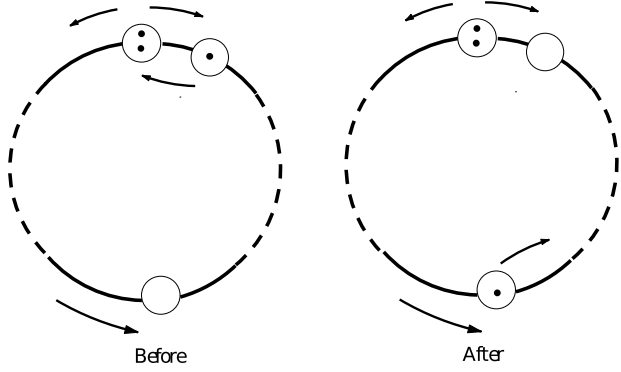}
	\caption{Adversary swapping nodes to address case (ii).a of 1-hole impossibility.}\label{fig:1-hole-ii.a} 
	\end{figure}
	
	\begin{figure}
		\includegraphics[page=2,height=2in]{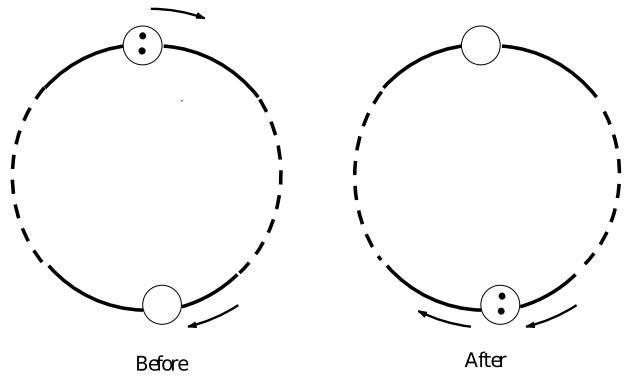}
		\caption{Adversary swapping nodes to address case (ii).b of 1-hole impossibility.}\label{fig:1-hole-ii.b} 
	\end{figure}
	
	Condition (ii) can be satisfied in one of two ways. Either (ii.a) both robots in the multinode move out of the node and a robot in a neighboring node moves into the node or (ii.b) one robot stays in node, the other robot moves to a neighboring node, and no robot from a neighboring node moves into the multinode. If the planned robot movements satisfy condition (ii.a), then the adversary swaps the hole with the neighboring node specified in the condition, as illustrated in Figure~\ref{fig:1-hole-ii.a}. No robot will move into the multinode and it will become a hole. Hence dispersion will not be achieved. If the planned robot movements satisfy condition (ii.b), then the adversary swaps the multinode and the hole, as illustrated in Figure~\ref{fig:1-hole-ii.b}. Since no robot moves into the hole, it will remain a hole and dispersion will not be achieved.\\
	
	\textbf{For $2$ holes case:}

	Two holes present in the system arise when either (i) a multinode has three robots or (ii) two multinodes have two robots each. We look at each case in turn.\\
	
	\emph{(i) A multinode has three robots:}
	
	In order to achieve dispersion in a given round the following two conditions should be satisfied: (1) two robots on the multinode should be move in different directions while one stays on the node and (2) there should be one incoming robot into each of the holes. The reason for this specific movement of robots on the multinode is that if more than one robot moves in either direction or stays on the node, a multinode will definitely exist at the end of the round and dispersion will not be achieved. Condition (2) can be satisfied in one of two ways: (2.a) at least one of the robots that move into the holes comes from a singleton node or (2.b) both holes receive a robot from the multinode.
	
	\begin{figure}
		\includegraphics[page=7,height=2in]{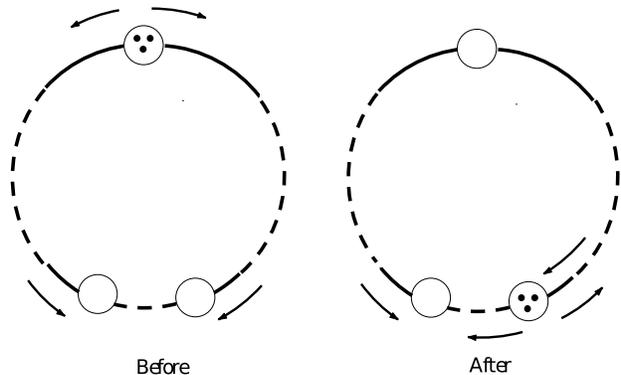}
		\caption{Adversary swapping nodes to address case 2.a of 2-hole impossibility when one multinode has three robots.}\label{fig:2-hole-3-rob-2.a} 
	\end{figure}
	
	\begin{figure}
		\includegraphics[page=8,height=2in]{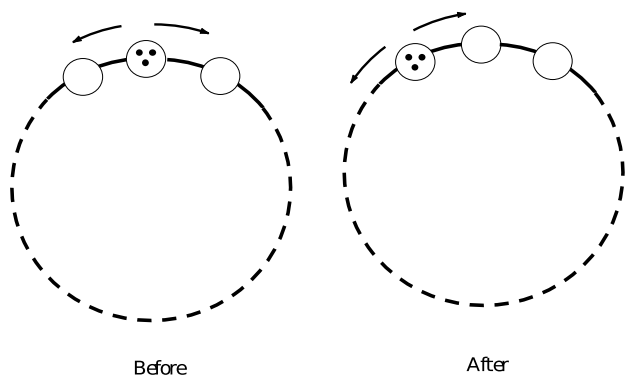}
		\caption{Adversary swapping nodes to address case 2.b of 2-hole impossibility when one multinode has three robots.}\label{fig:2-hole-3-rob-2.b} 
	\end{figure}
	
	If robot movements satisfy condition (2.a), the adversary swaps  the multinode with a hole that would have received a robot from a singleton node, as illustrated in Figure~\ref{fig:2-hole-3-rob-2.a}. If both holes would have received a robot from a singleton node, the adversary choose one arbitrarily. After the swap operation, the robot from the singleton node will move into the multinode. And since there will already be a robot in the multinode after robots in it move in order to satisfy condition (1), it still remains a multinode. Thus, dispersion is not achieved at the end of the round.
	
	If robot movements satisfy condition (2.b), the adversary swaps the multinode with any one of the holes, as illustrated in Figure~\ref{fig:2-hole-3-rob-2.b}. The hole which was not swapped does not have an incoming robot from any other node and remains a hole. Thus, dispersion is not achieved at the end of the round.\\
	
	\emph{(ii) Two multinodes have two robots each:}
		
	In order to achieve dispersion in a given round, the following two conditions should be satisfied: (1) for each hole, there should be one robot that will move into it and (2) for each multinode, either the robots move in opposite directions or one robot moves and the other stays on the node. The reason for the specific movements in condition (2) is that if more both robots decide to move in the same direction or stay, then the adjacent node in that direction will become a multinode or the original multinode will stay a multinode respectively. Let the two multinodes be $m_1$ and $m_2$. There are four ways condition (2) may be satisfied: (2.a) for both $m_1$ and $m_2$, the robots move in opposite directions, (2.b) for both $m_1$ and $m_2$, one robot moves and the other stays on the node, (2.c) for $m_1$, one robot moves and one stays and for $m_2$, both robots move in opposite directions or (2.d) for $m_1$, both robots move in opposite directions and for $m_2$, one robot moves and one stays.
	
	\begin{figure}
		\includegraphics[page=3,height=2in]{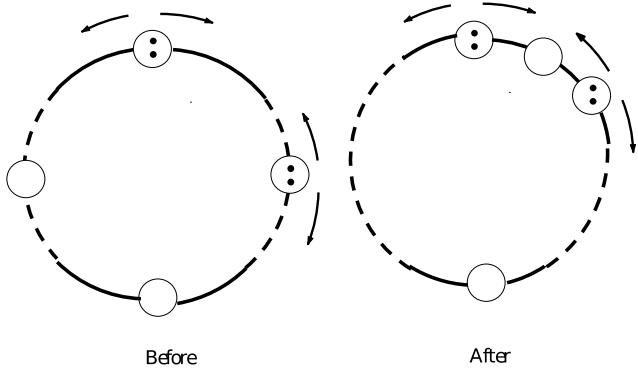}
		\caption{Adversary swapping nodes to address case 2.a of 2-hole impossibility when two multinodes have two robots each.}\label{fig:2-hole-2-multi-2.a} 
	\end{figure}
	
	\begin{figure}
		\includegraphics[page=4,height=2in]{pictures.pdf}
		\caption{Adversary swapping nodes to address case 2.b of 2-hole impossibility when two multinodes have two robots each. Both robots that choose to move, move in the same direction.}\label{fig:2-hole-2-multi-2.bi} 
	\end{figure}
	
	\begin{figure}
		\includegraphics[page=5,height=2in]{pictures.pdf}
		\caption{Adversary swapping nodes to address case 2.b of 2-hole impossibility when two multinodes have two robots each. The robots that choose to move, move in opposite directions.}\label{fig:2-hole-2-multi-2.bii} 
	\end{figure}
	
	\begin{figure}
		\includegraphics[page=6,height=2in]{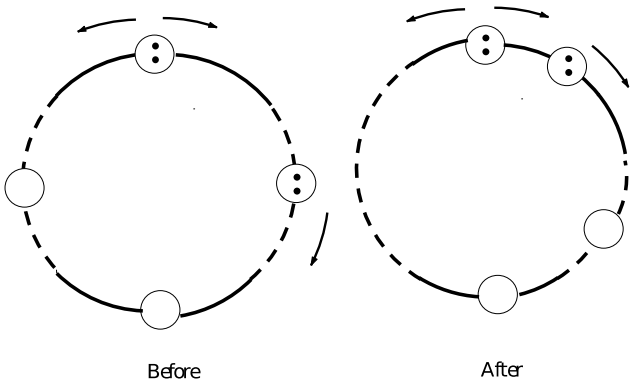}
		\caption{Adversary swapping nodes to address cases 2.c \& 2.d of 2-hole impossibility when two multinodes have two robots.}\label{fig:2-hole-2-multi-2.c-2.d} 
	\end{figure}
	
	If robot movements satisfy condition (2.a), the adversary can permute nodes so that the three node sequence consisting of $m_1$, a hole, and $m_2$ in that clockwise order arises, as illustrated in Figure~\ref{fig:2-hole-2-multi-2.a}. Since the robots of both $m_1$ and $m_2$ move in opposite directions, the hole which is adjacent to both of them becomes a multinode. Thus, dispersion is not achieved at the end of the round.
	
	If robot movements satisfy condition (2.b), the adversary can make $m_1$ and $m_2$ adjacent to each other. The exact order of $m_1$ and $m_2$ depends on the relative directions of the moving robots. If the moving robots move in same direction, it doesn't matter since either $m_1$ will have a robot enter it from $m_2$ or vice-versa, as illustrated in Figure~\ref{fig:2-hole-2-multi-2.bi}. Since $m_1$ and $m_2$ both retain one robot, the addition of this robot ensures a multinode exists. If the moving robots move in opposite directions, then we place $m_1$ and $m_2$ next to each other such that the moving robot from $m_1$ moves into $m_2$ and vice-versa, as illustrated in Figure~\ref{fig:2-hole-2-multi-2.bi}. This ensures that both nodes remain multinodes. Thus, dispersion is not achieved at the end of the round.
	
	If robot movements satisfy either condition (2.c) or (2.d), the adversary can move $m_1$ and $m_2$ so that they are adjacent to each other, as illustrated in Figure~\ref{fig:2-hole-2-multi-2.c-2.d}. Since one of the nodes retains a robot and a robot will move into that node from the other node, a multinode will exist. Thus, dispersion is not achieved at the end of the round.\\
	
	\textbf{For $\geq 3$ holes case:}
	
	In this case the adversary can permute the nodes in such a way that all the holes are consecutive. This ensures that there exists a hole with two adjacent holes. This hole can never be filled by any movement of the robots. 
	\end{proof}

	Putting together Lemmas~\ref{lemma::no-vis-vp-imp-n3} and~\ref{lemma::no-vis-vp-imp-n4more}, we see that the theorem is true.
\end{proof}

Note that this proof hinges heavily on Observation~\ref{obs:local-info} that the adversary is able to predict the movement of the robots in a given round prior to them moving. In the case where robots are allowed to use randomization to determine movements, if we give the adversary extra powers to see those bits, then dispersion even with the allowance of randomness is impossible. This is captured by the following corollary.

\begin{corollary}\label{cor:vp-stronger-adv}
	At the beginning of any round, if the adversary is allowed to know the random bits, if any, that robots will use to determine movement in that round, then the following holds. If robots have no visibility, it is impossible to achieve dispersion in a ring with $\geq 3$ nodes in the presence of vertex permutation dynamism.
\end{corollary}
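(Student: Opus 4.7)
The plan is to reduce the randomized setting under the stronger adversary back to the deterministic setting of Theorem~\ref{the:no-vis-vp-imp}, round by round. Concretely, I would first extend Observation~\ref{obs:local-info} to the randomized case: when robots have no visibility, a robot $u$'s choice of move in a given round is a (possibly randomized) function of the local information in $u.node$ together with the random bits $u$ uses in that round. The hypothesis of the corollary hands the adversary exactly those random bits at the start of the round, and the adversary of course already sees the entire configuration (it is, after all, the party that just chose it). Consequently, before the adversary commits to its vertex permutation, it can deterministically compute the exact intended move of every robot in the round.

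Once the planned moves are fixed and known to the adversary, the round is indistinguishable from a round in the deterministic model of Theorem~\ref{the:no-vis-vp-imp}. I would then simply invoke the case analysis of Lemmas~\ref{lemma::no-vis-vp-imp-n3} and~\ref{lemma::no-vis-vp-imp-n4more} verbatim: in the $n=3$ analysis the adversary maintains the ``neutral configuration'' invariant and blocks gathering and dispersion by an explicit swap; in the $n\ge 4$ analysis the adversary splits on whether the current configuration has $1$, $2$, or $\ge 3$ holes (with the two-hole case further split according to whether there is a multiplicity-three multinode or two multiplicity-two multinodes) and in each sub-case executes the same swap that preserves at least one multinode or an unfillable hole. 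Since the adversary's recipe in each sub-case is a pure function of the configuration and the set of planned moves, and both are now known to it, the same invariant is preserved here.

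Because the invariant forbids dispersion and is maintained with probability $1$ in every round, no randomized algorithm (even with unbounded private randomness) can achieve dispersion against this adversary. The main, and essentially only, point requiring care is justifying the claim that ``knowing the random bits'' reduces the robots' randomized moves to deterministic ones in the eyes of the adversary; this is immediate because, conditioned on the local state and the random bits, the distribution of each robot's move is a point mass. After this reduction, the rest of the argument is a direct appeal to the proof of Theorem~\ref{the:no-vis-vp-imp}, so no further case work is needed.
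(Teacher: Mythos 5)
Your proposal is correct and matches the paper's intended argument exactly: the paper justifies this corollary with precisely the observation that revealing the random bits makes each robot's move a deterministic function of its local state from the adversary's viewpoint, after which the case analysis of Lemmas~\ref{lemma::no-vis-vp-imp-n3} and~\ref{lemma::no-vis-vp-imp-n4more} applies verbatim. Your write-up is in fact somewhat more explicit than the paper's brief remark, but the reduction and its justification are the same.
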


\subsection{With 1-Interval Connectivity}

\begin{theorem}\label{the:no-vis-vp-1-int-imp}
	If robots have no visibility, it is impossible to achieve deterministic dispersion in a ring with $\geq 2$ nodes in the presence of 1-interval connectivity dynamism.
\end{theorem}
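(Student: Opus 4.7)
The plan is to mirror the structure of Theorem~\ref{the:no-vis-vp-imp} and construct an adversary strategy that preserves the invariant \emph{at least one multinode exists} in every round. Since there are $n$ robots and $n$ nodes, this is equivalent by pigeonhole to the existence of a hole, and hence to dispersion not being achieved. The key leverage is Observation~\ref{obs:local-info}: the adversary can compute every robot's planned action for the round from the local state of each node before committing to its single edge removal. I would begin by placing all $n$ robots on a single node, so the invariant holds initially, and then argue inductively.

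For the inductive step, given a configuration with at least one multinode, the adversary must delete at most one edge so that at least one multinode survives. I would fix an arbitrary multinode $M$ holding $k \geq 2$ robots and classify its robots' planned moves by the triple $(c, a, s)$ counting those going clockwise, counterclockwise, or staying. Three straightforward cases handle most scenarios: if $s \geq 2$, leaving every edge intact keeps $M$ a multinode; if $s \leq 1$ but $s + c \geq 2$, removing the clockwise edge incident to $M$ traps the $c$ clockwise-bound robots on $M$, leaving it with at least $s + c \geq 2$ robots; and symmetrically for the counterclockwise side when $s + a \geq 2$.

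The only remaining and hardest case, which I expect to be the main obstacle, is when \emph{every} multinode satisfies $k = 2$ and $(c, a, s) = (1, 1, 0)$, so both of its robots leave in opposite directions and no adjacent edge deletion can retain two robots on any one multinode. My plan here is to show that doing nothing already preserves the invariant via a structural argument. If some such $M$ has a non-hole neighbor, then the robot from $M$ arriving at that neighbor makes it a multinode. Otherwise every multinode has two hole neighbors, and a pigeonhole count (the number of holes equals the number of multinodes when each multinode holds exactly two robots) forces the configuration to be the alternating pattern $M - h - M - h - \cdots$ with $n = 2m$; in this pattern each hole receives one robot from each of its two multinode neighbors and becomes a multinode, so the multinode count is preserved. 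The boundary case $n = 2$ (treating the ring as a multigraph with two parallel edges) is handled by the same case split applied to the single initial multinode: any attempt to split it either sends both robots to the other node (still a multinode) or is defeated by removing the planned edge.
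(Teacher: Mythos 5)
Your overall framework (use Observation~\ref{obs:local-info} to predict the moves, then delete at most one edge so that some multinode survives) is sound, and your first three cases are fine, but the ``hard case'' contains a genuine error, and it is exactly the case in which dispersion is actually threatened. You claim that if a multinode $M$ whose two robots depart in opposite directions has a non-hole neighbor, then ``the robot from $M$ arriving at that neighbor makes it a multinode.'' This ignores that the neighbor's own occupants may simultaneously depart. Concretely, take $n=4$ with nodes $v_1,v_2,v_3,v_4$ in ring order, two robots on $v_1$, one robot on each of $v_2$ and $v_4$, and $v_3$ a hole. Suppose the planned moves are: the two robots of $v_1$ go to $v_2$ and $v_4$ respectively, the robot of $v_2$ goes to $v_1$, and the robot of $v_4$ goes to $v_3$. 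The unique multinode is of your type $k=2$, $(c,a,s)=(1,1,0)$ and both its neighbors are non-holes, so your strategy is ``do nothing'' --- yet these moves achieve dispersion. Your structural argument only covers the alternating multinode--hole pattern and does not account for configurations containing singletons whose robots move.

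The repair is essentially the paper's (much shorter) argument, which you could graft onto your hard case or adopt wholesale: if the predicted moves would achieve dispersion, then every hole must end the round with exactly one robot, which must arrive through exactly one of the hole's two incident edges (two arrivals would turn the hole into a multinode); the adversary deletes that single edge, the hole persists, and dispersion fails in that round. This needs no case analysis on the multinodes and works uniformly for all $n\geq 2$; in the example above it deletes the edge $v_3v_4$. Your multinode-centric cases (removing an incident edge to ``trap'' the robots that planned to cross it) do rely on blocked robots staying put, but that is the same assumption the paper's own proof makes, so the only real defect is the hard case.
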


\begin{proof}
	From Observation~\ref{obs:local-info}, the adversary can predict the movement of the robots in a given round at the beginning of that round. We show that in any round, regardless of the movements of the robots, the adversary can remove at most a single edge to prevent dispersion from being achieved in that round. Thus, in all rounds dispersion will be prevented from being achieved. 
	
	In a given round, if the movement of robots does not lead to dispersion, the adversary need not remove any edge. 
	
	We need to consider only the movement of robots that leads to dispersion at the end of a given round. In order for dispersion to occur, all holes in the ring must have an incoming robot from exactly one of their neighboring nodes. This is because if robots came from both the neighboring nodes of the hole, the hole will become a multinode and dispersion will not occur. 
	
	Consider one such hole and the neighboring node from which a robot will move to the hole. The adversary can just remove the edge connecting the hole and that neighboring node. Thus, at the end of that round, the hole remains and dispersion does not occur.
\end{proof}

Similar to Corollary~\ref{cor:vp-stronger-adv}, when we have a stronger adversary that can predict the moves the robots will make in the current round, even when randomness is involved, dispersion is impossible in the face of 1-interval connectivity.

\begin{corollary}\label{cor:vp-1-int-stronger-adv}
	At the beginning of any round, if the adversary is allowed to know the random bits, if any, that robots will use to determine movement in that round, then the following holds. If robots have no visibility, it is impossible to achieve dispersion in a ring with $\geq 2$ nodes in the presence of 1-interval connectivity dynamism.
\end{corollary}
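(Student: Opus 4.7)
The plan is to mimic the proof of Theorem~\ref{the:no-vis-vp-1-int-imp} almost verbatim, replacing the use of determinism with the assumed extra power of the adversary. The only place determinism entered that proof was in the sentence ``the adversary can predict the movement of the robots in a given round at the beginning of that round,'' which followed from Observation~\ref{obs:local-info}. In the randomized setting, Observation~\ref{obs:local-info} still holds (no-visibility robots only see their own node), but a robot's move is now a function of its local information \emph{and} some random bits it tosses in that round. The hypothesis of the corollary grants the adversary access to precisely these bits at the start of the round, so from the adversary's vantage point each robot's move is again a deterministic function of information the adversary already possesses before it decides which edge (if any) to remove.

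First I would formally state the adapted version of Observation~\ref{obs:local-info}: for every robot $u$, given the local state at $u.node$ and the random bits $u$ consults in the current round, $u$'s move is completely determined; in particular the adversary, knowing both, can compute the multiset of moves that every robot will perform this round. I would then repeat the case analysis of Theorem~\ref{the:no-vis-vp-1-int-imp}. If the (now predictable) moves do not lead to dispersion this round, the adversary removes no edge. Otherwise, dispersion is about to occur, which forces every hole to receive exactly one incoming robot from exactly one of its two neighbors (otherwise either the hole remains empty or it becomes a multinode, contradicting dispersion). Pick any such hole $h$ and let $v$ be the unique neighbor sending a robot into $h$; the adversary removes the edge $\{h,v\}$. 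By 1-interval connectivity, this single removal is permitted. After the round, $h$ still has no robot, so dispersion has not been achieved.

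Since the argument applies independently in every round, in every round dispersion fails to occur, and the corollary follows. The only thing to be careful about is making sure the adversary is allowed to observe the bits \emph{before} committing to its edge removal, which is exactly what the corollary's hypothesis stipulates; the order of the four stages in Section~\ref{sec:prelims} (adversary acts, look, compute, move) is consistent with this, so no modification of the synchronous model is needed beyond the stated extra knowledge. The main (indeed only) obstacle is a bookkeeping one: writing the argument so that it is clear the adversary's single edge removal is determined \emph{after} inspecting the random bits but \emph{before} the robots actually move, so that the removal is guaranteed to block the planned transition into dispersion; once this ordering is made explicit, the rest is immediate from the proof of Theorem~\ref{the:no-vis-vp-1-int-imp}.
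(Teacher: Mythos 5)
Your proposal is correct and matches the paper's intent exactly: the paper proves this corollary only by remarking that the argument of Theorem~\ref{the:no-vis-vp-1-int-imp} carries over once the adversary's knowledge of the random bits restores its ability to predict each robot's move, which is precisely the substitution you make before rerunning the hole-blocking argument. No gaps.
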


%----------------------------------------------------------------------------------------------

\section{Conclusions and Future Work}

\label{sec:conc}
% !TEX root = Deterministic-Dispersion-Dynamic.tex
 %Please leave the above line untouched. Thanks, Billy.

In this work, we have developed asymptotically time optimal algorithms to achieve dispersion when robots have full visibility, with and without chirality, in the face of vertex permutation dynamism and 1-interval connectivity dynamism. We have also proved the impossibility of achieving dispersion when robots have no visibility. Recall that any solution to dispersion acts as a solution to $n$ robot collaborative exploration and $n$ robot scattering under the same assumptions and model. Thus all our algorithms also solve the problems of $n$ robot collaborative exploration and $n$ robot scattering on a ring of size $n$ for the given model and constraints of the algorithms.

Two interesting open questions arise from this work.\\
\textbf{Open Question 1:} What can be said of the possibility or impossibility of achieving dispersion when visibility is limited to more than no visibility but less than full visibility?\\ 
\textbf{Open Question 2:} Is dispersion of achiral robots on an even node ring of size $\geq 6$, in the presence of both vertex permutation and 1-interval connectivity dynamism, possible or impossible?

%----------------------------------------------------------------------------------------------

\section*{Acknowledgements}
We thank the anonymous reviewers of a previous version of this paper for pointing us to several relevant references and for other comments that improved the presentation of this paper.

\bibliographystyle{abbrv}
\bibliography{references} 

\end{document}